\newtheorem{theorem}{Theorem}
\newtheorem{lemma}{Lemma}
\newcommand{\bbee}{\begin{equation}}
\newcommand{\eeee}{\end{equation}}
\newcommand{\bbaa}{\begin{array}}
\newcommand{\eeaa}{\end{array}}
\begin{document}
\thispagestyle{empty}
\setcounter{page}{1}
\setlength{\baselineskip}{1.5\baselineskip}
\title{Universal Quantization for Separate Encodings and Joint Decoding of Correlated Sources \thanks{This research was supported by the Israeli Science Foundation (ISF),
grant no.\ 208/08.}\footnotetext{This paper was presented in part at 2014 IEEE International Symposium on Information Theory (ISIT).}}
\author{Avraham Reani and Neri Merhav\\
Department of Electrical Engineering\\
Technion - Israel Institute of Technology\\
Technion City, Haifa 32000, Israel\\
Emails: [avire@tx, merhav@ee].technion.ac.il}\maketitle
\begin{abstract}
We consider the multi-user lossy source-coding problem for continuous alphabet sources. In a previous work, Ziv proposed a single-user universal coding scheme which uses uniform quantization with dither, followed by a lossless source encoder (entropy coder). In this paper, we generalize Ziv's scheme to the multi-user setting. For this generalized universal scheme, upper bounds are derived on the redundancies, defined as the differences between the actual rates and the closest corresponding rates on the boundary of the rate region. It is shown that this scheme can achieve redundancies of no more than 0.754 bits per sample for each user. These bounds are obtained without knowledge of the multi-user rate region, which is an open problem in general. As a direct consequence of these results, inner and outer bounds on the rate-distortion achievable region are obtained.
\\\\{\bf Index Terms: Multi-terminal source coding, Dithered quantization, Universal source coding, scalar quantization, Slepian-Wolf coding.}
\end{abstract}
\clearpage
\section{Introduction}
Consider the case where two correlated sources are observed separately
by two non-cooperative encoders which communicate with one decoder. The
decoder needs to reconstruct both sources and the distortions between the reconstructions and the corresponding sources should not exceed some given values. The general version of this problem has remained open for several decades, even under the assumption of memoryless sources. However, many special cases have been solved. When no distortion is allowed, this is the problem considered by Slepian and Wolf \cite{Slepian_Wolf}. Their well-known result states that two discrete
sources $X_1$ and $X_2$ can be losslessly reproduced if and only if 
\begin{subequations}
	\begin{eqnarray}
 {R}_1&\geq& H(X_1|X_2),\\
 {R}_2&\geq& H(X_2|X_1),\\
 {R}_1+{R}_2&\geq& H(X_1,X_2)
 	\end{eqnarray}
 \label{slepian_wolf_eq}
\end{subequations}
where $R_1$ is the rate of the encoder observing $X_1$ and $R_2$ is the
rate of the encoder observing $X_2$. Returning to the lossy case, the setting in which one of the variables is known to the decoder, is the original Wyner-Ziv problem \cite{Wyner_Ziv}. This setting was generalized to continuous alphabet sources by Wyner \cite{Wyner}.
Other examples include the source coding problem with side information of Ahlswede-K\"{o}rner \cite{Ahlswede_Korner}, where an arbitrary distortion is allowed for one of the sources and the other source should be reconstructed losslessly. Berger
and Yeung \cite{Berger_Yeung} considered a setting where one of the sources is to be perfectly
reconstructed and the other source should be reconstructed with a distortion
constraint (their setting subsumes all previous examples). Zamir and Berger \cite{Zamir_Berger} characterized the rate-distortion region in the high-SNR limit. Wagner and Anantharan \cite{Wagner_Anantharam} presented a new outer bound which is better than the previous outer bounds in the literature.

Recent results for specific sources and distortion measures include the works of Wagner, Tavildar, and Viswanath \cite{Wagner_Tavildar_Viswanath}, who determined the rate region for the quadratic Gaussian multiterminal source coding problem, by showing that the Berger-Tung \cite{Berger_Tung} inner bound is tight. In addition, a characterization of the rate region
under logarithmic loss was given by Courtade and Weissman \cite{Courtade_Weissman}.
Finally, a version of this problem, where both users and the decoder must operate with
zero-delay, was considered by Kaspi and Merhav \cite{Kaspi_Merhav}, who characterized the rate region in this case.

In \cite{Ziv}, Ziv presented a universal coding scheme for the single-user case. This scheme is composed of a uniform, one-dimensional quantizer with dither, followed by a noiseless variable-rate encoder (entropy encoder). He showed that this scheme yields a rate that is, for every positive integer $n$, no more than $0.754$ bits per sample higher than the best possible rate associated with the optimal $n$-dimensional quantizer. This result was later revisited and further developed by Zamir and Feder \cite{Zamir_Feder1}, \cite{Zamir_Feder2}, who also gave a redundancy upper bound which depends on the source distribution. However, their derivation of the global upper bound relies on the known formula of the single-user rate-distortion function. In addition, a dithered scheme for the multi-user setting, which is similarly to the scheme in this paper, was given in \cite{Zamir_Berger}. Redundancy upper bounds can be derived by bounding the difference between the dithered scheme rate region and the outer bound on the multi-user rate region given in \cite{Zamir_Berger}. These bounds depend on the divergence between the source distribution and a Gaussian distribution. As a result, they are not uniformly bounded (for every source distribution) in contrast to the bound of Ziv and the bounds presented in this paper. In addition, only the redundancy of the sum of the rates can be upper bounded using the methods of \cite{Zamir_Berger}.

In this paper, we investigate a generalized scheme for the multi-user setting. In this scheme, each user uses dithered quantizer followed by universal Slepian-Wolf encoder. We show that the rates achieved by this scheme are no more than 0.754 bits per sample away from the boundary of the achievable rate region, for each user. This is done regardless of the characterization of the achievable region, which is, as mentioned before, unknown in general. As a direct consequence of these results, inner and outer bounds on the achievable region are obtained. Finally, similarly to the results of \cite{Ziv}, it is straightforward to show that using multi-dimensional lattice quantizers instead of scalar ones would decrease the redundancy to about 0.5 bits per sample for high lattice dimension.

The remainder of this paper is organized as follows. In Section 2, we present the problem formulation and give basic results regarding the performance of the dithered scheme. In Section 3, we revisit the redundancy upper bound of \cite{Ziv}. In Section 4, we enhance the results of Section 2 by adding an estimation stage to the dithered scheme. We conclude this work in Section 5.
\section{Problem Formulation and Basic Results}
Throughout the paper, random variables will be denoted by capital
letters and their alphabets will be denoted by calligraphic letters. Random
vectors (all of length $n$) will be denoted by capital letters in the bold face font.

In this section, we present the multi-user setting we deal with and describe the dithered coding scheme we use. Then, we give upper bounds on the performance of this scheme, compared to the boundary of the optimal rate region.

We begin with defining the multi-user rate region. Let $(X_1,X_2)$ be a continuous alphabet memoryless source, characterized by the joint probability density $P_{X_1X_2}$. We assume that $P_{X_1X_2}$ has bounded support, i.e., there exists $A\in \mathbb{R}^+$ such that $P_{X_1X_2}(x_1,x_2)=0$ if $(x_1,x_2)\notin [-A,A]\times [-A,A]$. The reason for this assumption will be explained later. A rate pair $(R^{*}_1,R^{*}_2)$ is said to be $(D_1,D_2)$-achievable under the mean-square error distortion measure with respect to $(X_1,X_2)$, if for every $\delta>0$ and sufficiently large $n$, there exists a code of block length $n$ consisting of two encoders ${f}_1$, ${f}_2$
\begin{eqnarray}
{f}_1:[-A,A]^n\rightarrow I_{M_1},&{f}_2:[-A,A]^n\rightarrow I_{M_2}
\label{MT1}
\end{eqnarray}
and a decoder $g$
\begin{eqnarray}
g:I_{M_1} \times I_{M_2}\rightarrow [-A,A]^n\times [-A,A]^n
\label{MT2}
\end{eqnarray}
such that
\begin{eqnarray}
\displaystyle\frac{1}{n} \mathbb{E}||\bold{X}_1-\bold{\hat{X}}_1||^2\leq D_1+\delta,&\displaystyle\frac{1}{n} \mathbb{E}||\bold{X}_2-\bold{\hat{X}}_2||^2\leq D_2+\delta
\end{eqnarray}
and
\begin{eqnarray}
\displaystyle\frac{1}{n}\log M_1\leq R^{*}_1+\delta,&\displaystyle\frac{1}{n}\log M_2\leq R^{*}_2+\delta,
\end{eqnarray}
where $I_{M_i}\triangleq\{1,2,\ldots,M_i\}$, $i\in\{1,2\}$. The set of $(D_1,D_2)$-achievable rate pairs, is denoted by ${\cal{R}}^{*}(D_1,D_2)$.

Our scheme works as follows. We have two encoders $\tilde{f}_1$, $\tilde{f}_2$:
\begin{eqnarray}
\tilde{f}_1:[-A,A]^n \times [-\sqrt{3D_1},\sqrt{3D_1}]\rightarrow I_{\tilde{M}_1},&\tilde{f}_2:[-A,A]^n\times [-\sqrt{3D_2},\sqrt{3D_2}]\rightarrow I_{\tilde{M}_2}
\end{eqnarray}
and a decoder $\tilde{g}$
\begin{eqnarray}
\tilde{g}:I_{\tilde{M}_1} \times I_{\tilde{M}_2}\times [-\sqrt{3D_1},\sqrt{3D_1}]\times[-\sqrt{3D_2},\sqrt{3D_2}]\rightarrow [-A,A]^n\times [-A,A]^n.
\end{eqnarray}
Each encoder $\tilde{f}_i$, $i\in\{1,2\}$, uses a one-dimensional uniform quantizer ${\cal{Q}}_i$, ${\cal{Q}}_i:{\mathbb{R}}\rightarrow \{0,\pm 2\sqrt{3D_i},\pm 2\cdot 2\sqrt{3D_i},\ldots\}$ and a dither random variable (RV) ${Z_i}$, uniformly distributed over $[-\sqrt{3D_i},\sqrt{3D_i}]$, to produce 
$\bold{\cal{Q}}_i(\bold{X}_i+\bold{Z}_i)\triangleq [{\cal{Q}}_i(X_{i,1}+Z_{i}),{\cal{Q}}_i(X_{i,2}+Z_{i}),\ldots,{\cal{Q}}_i(X_{i,n}+Z_{i})]$, where $\bold{Z}_i$ denotes a vector of dimension $n$ composed of $n$ repetitions of the same realization of ${Z_i}$. For convenience,  the random variable ${\cal{Q}}_i({X}_{i}+{Z}_i)$ and the random vector $\bold{\cal{Q}}_i(\bold{X}_{i}+\bold{Z}_i)$ will be denoted by $Y_{i}$ and $\bold{Y}_i$, respectively. The dither RV's, $Z_1$ and $Z_2$, are available to the respective encoders and to the decoder and are independent.
As is shown in \cite[Lemma 1]{Ziv},
\begin{eqnarray}
\displaystyle{\mathbb{E}}\left[||Y_i-{Z}_i-{X}_i||^2|X_i\right]=D_i,&i\in\{1,2\}
\end{eqnarray}
where the expectation is taken over ${Z_i}$. Notice that the distortion is $D_i$ independently of $X_i$ and therefore the total distortion is also $D_i$. After the quantization stage, the two encoders perform Slepian-Wolf encoding with a rate pair $(R_1,R_2)$, for lossless compression of $\bold{Y}_1$ and $\bold{Y}_2$. Complying with Eq. (\ref{slepian_wolf_eq}), the rate pair $(R_1=\log\tilde{M}_1,R_2=\log\tilde{M}_2)$ satisfies 
\begin{figure}[t]
	\centering		
	\includegraphics[width=6in, height=1.8 in]{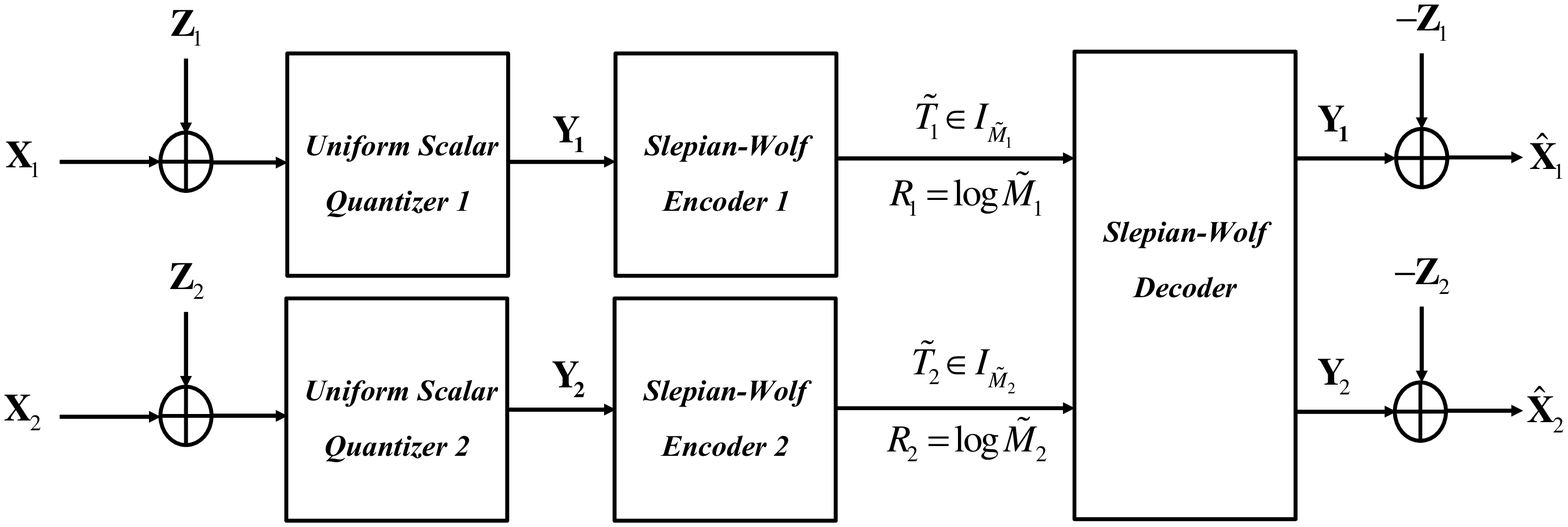}
		\caption { {The dithered coding scheme}} 
		\label{figure1}
\end{figure}
\begin{subequations}
\begin{eqnarray}
R_1&\geq&\displaystyle H\left({Y}_1|{Y}_2,Z_1,Z_2\right),\\
R_2&\geq&\displaystyle H\left({Y}_2|{Y}_1,Z_1,Z_2\right),\\
R_1+R_2&\geq& H\left({Y}_1,{Y}_2|Z_1,Z_2\right)
\end{eqnarray}
\label{RR}
\end{subequations}
where we used the following, for every value of $n$
\begin{subequations}
\begin{eqnarray}
\label{memoryless}
\displaystyle\frac{1}{n} H\left(\bold{Y}_1|\bold{Y}_2,Z_1,Z_2\right)&=&H\left(Y_1|Y_2,Z_1,Z_2\right),\\
\displaystyle\frac{1}{n} H\left(\bold{Y}_2|\bold{Y}_1,Z_1,Z_2\right)&=&H\left(Y_2|Y_1,Z_1,Z_2\right),\\
\displaystyle\frac{1}{n} H\left(\bold{Y}_1,\bold{Y}_2|Z_1,Z_2\right)&=&H\left(Y_1,Y_2|Z_1,Z_2\right).
\end{eqnarray}
\end{subequations}
To see why (\ref{memoryless}) is true, consider the following chain
\begin{eqnarray}
\nonumber\displaystyle\frac{1}{n} H\left(\bold{Y}_1|\bold{Y}_2,Z_1,Z_2\right)&=&\displaystyle\frac{1}{n} \sum_{i=1}^{n} H\left({Y}_{1,i}|{Y}_{1,1},{Y}_{1,2},\ldots,Y_{1,i-1},\bold{Y}_2,Z_1,Z_2\right)\\
\nonumber&=&\displaystyle\frac{1}{n} \sum_{i=1}^{n} H\left(Y_{1,i}|Y_{2,i},Z_1,Z_2\right)\\
&=&H\left(Y_1|Y_2,Z_1,Z_2\right)
\end{eqnarray}
where the second equality stems from the fact that $Y_1$ and $Y_2$ are memoryless given $Z_1$ and $Z_2$ and the third equality stems from the stationarity of the source. The same can be done for $ H\left(\bold{Y}_2|\bold{Y}_1,Z_1,Z_2\right)$ and $H\left(\bold{Y}_2,\bold{Y}_2|Z_1,Z_2\right)$.

The rate region of Eq. (\ref{RR}) is achievable for $n$ sufficiently large and it is denoted by ${\cal{R}}(D_1,D_2)$. The interesting range of $R_1$ is $\overline{{\cal{R}}}_1(D_1,D_2)\triangleq\left[\displaystyle H\left({Y}_1|{Y}_2,Z_1,Z_2\right),\displaystyle H\left({Y}_1|Z_1\right)\right]$ since higher rate can always be reduced to this range. The same is true for $R_2$. The universal decoder first decodes $\bold{Y}_1$ and $\bold{Y}_2$ (correctly with high probability), and then subtracts the corresponding dithers to obtain the reconstruction vectors $\bold{\hat{X}}_1$, $\bold{\hat{X}}_2$:
\begin{eqnarray}  
\bold{\hat{X}}_i=\bold{Y}_i-\bold{Z}_i.
\end{eqnarray}
The universal Slepian-Wolf decoder is described in Appendix A. The dithered coding scheme is presented in Fig. 1.\\
{\it{Remark}}. The Slepian-Wolf mechanism can be applied, in general, to sources with countably-infinite alphabets. However, a universal Slepian-Wolf scheme for such sources is not known. Trying to preserve universality in the case of infinite alphabets would require the assignment of infinite number of sequences into bins. Thus, even the codebook generation  does not seem to be feasible in this case. This is not surprising, considering the fact that even in the single-user case, diminishing redundancy cannot be achieved for universal lossless coding of sources with infinite alphabets (see, e.g., \cite{Kieffer}). Therefore, for the sake of universality, we assumed that the source alphabets have bounded supports so the outputs of the quantizers have finite alphabets. From the above, this assumption is also needed for the original single-user scheme of Ziv \cite{Ziv}. The inner and outer bounds on the achievable rate-distortion region, which are obtained as a direct consequence of Theorems 1-4 below, are also valid, of course, for sources with unbounded support, as they do not depend on the universality. 

We begin with a simple result.
\begin{theorem}
For any rate pair $(R^{*}_1,R^{*}_2)$ on the boundary of ${\cal{R}}^{*}(D_1,D_2)$ and any rate pair $(R_1,R_2)$ on the boundary of ${\cal{R}}(D_1,D_2)$, with $R_1\in \overline{{\cal{R}}}_1(D_1,D_2)$, we have
\begin{eqnarray}
R_1+R_2\leq R^{*}_1+R^{*}_2+2c
\end{eqnarray}
where $c=0.754$ bits/sample.\\
Moreover, for any $R^{*}_1\in\overline{\cal{R}}_1(D_1,D_2)$, there exists a rate pair $(R_1,R_2)\in{\cal{R}}(D_1,D_2)$ such that
\begin{eqnarray}
\nonumber R_1&=&R^{*}_1\\
R_2&\leq& R^{*}_2+2c.
\end{eqnarray}
\label{the222}
\end{theorem}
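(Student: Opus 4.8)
The plan is to reduce both parts to the single inequality
\[
H(Y_1,Y_2\,|\,Z_1,Z_2)\ \le\ R_1^*+R_2^*+2c
\]
valid for every $(R_1^*,R_2^*)$ on the boundary of ${\cal R}^*(D_1,D_2)$, and then to extract the two assertions from the shape of ${\cal R}(D_1,D_2)$. The preliminary observation is that $Y_1$ is a function of $(X_1,Z_1)$ alone while $Z_2$ is independent of $(X_1,X_2,Z_1)$, so $H(Y_1\,|\,Z_1)=H(Y_1\,|\,Z_1,Z_2)$. Plugging this into the Slepian--Wolf constraints (\ref{RR}), the constraint $R_2\ge H(Y_2\,|\,Y_1,Z_1,Z_2)$ becomes, along the sum-rate line $R_1+R_2=H(Y_1,Y_2\,|\,Z_1,Z_2)$, exactly $R_1\le H(Y_1\,|\,Z_1)$. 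Hence $\overline{\cal R}_1(D_1,D_2)$ is precisely the $R_1$-projection of the dominant face of ${\cal R}(D_1,D_2)$, and every boundary pair $(R_1,R_2)$ of ${\cal R}(D_1,D_2)$ with $R_1\in\overline{\cal R}_1(D_1,D_2)$ obeys $R_1+R_2=H(Y_1,Y_2\,|\,Z_1,Z_2)$.

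Next I would bound $R_1^*+R_2^*$ from below using the \emph{merged} source. Any code achieving rates $(R_1^*+\delta,R_2^*+\delta)$ and distortions $(D_1+\delta,D_2+\delta)$ gives, upon concatenating $f_1$ and $f_2$, a block code of rate $R_1^*+R_2^*+2\delta$ for the i.i.d.\ vector source $(X_1,X_2)$ meeting the per-letter constraints $(D_1+\delta,D_2+\delta)$. Writing $\rho(D_1,D_2)$ for the rate--distortion function of $(X_1,X_2)$ under the pair of mean-square constraints, the standard converse, together with $\delta\downarrow 0$ and continuity of $\rho$, yields $R_1^*+R_2^*\ge\rho(D_1,D_2)$.

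The crux is $H(Y_1,Y_2\,|\,Z_1,Z_2)\le\rho(D_1,D_2)+2c$. The point is that $\hat X_i=Y_i-Z_i$ are exactly the reconstructions of a \emph{dithered rectangular-lattice quantizer} applied to the vector source $(X_1,X_2)$ with cell dimensions $2\sqrt{3D_1}\times 2\sqrt{3D_2}$: by Ziv's Lemma~1 applied coordinatewise it meets distortions $(D_1,D_2)$, and by the third line of (\ref{memoryless}) its per-vector-symbol conditional entropy equals $H(Y_1,Y_2\,|\,Z_1,Z_2)$. Rescaling the second coordinate turns this lattice into the two-dimensional cubic lattice, whose normalized second moment is that of the scalar uniform quantizer; invoking the routine $k$-dimensional extension of Ziv's redundancy bound for cubic-lattice quantizers---a redundancy of $k\cdot c$ relative to the optimal $N$-block quantizer at the same distortion, hence relative to $\rho$---with $k=2$ gives the claim. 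I expect this to be the main obstacle, and it needs two checks: that the redundancy constant of the two-dimensional cubic (equivalently rectangular) lattice really is $2c$; and that the comparison is against the \emph{joint} $\rho(D_1,D_2)$ and not against $R_{X_1}(D_1)+R_{X_2}(D_2)$---applying the scalar bound separately to $H(Y_1\,|\,Z_1)$ and $H(Y_2\,|\,Z_2)$ only yields the latter, which is too weak because a boundary point of ${\cal R}^*(D_1,D_2)$ need not satisfy $R_i^*\ge R_{X_i}(D_i)$ (as already in Wyner--Ziv-type settings). A self-contained variant: expand $n\,H(Y_1,Y_2\,|\,Z_1,Z_2)\le H(J_1,J_2)+H(\mathbf{Y}_1,\mathbf{Y}_2\,|\,J_1,J_2,Z_1,Z_2)$ with $J_i$ the message of a near-optimal competing code, bound the first term by $n(R_1^*+R_2^*+2\delta)$ and the second by $2nc$ using Ziv's entropy lemma on $\mathbf{E}_i+\mathbf{W}_i$ (competitor error plus dither error), with $\mathbf{W}_i\perp(\mathbf{X}_1,\mathbf{X}_2,\mathbf{Z}_1,\mathbf{Z}_2)$ giving the needed second-moment control.

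To assemble: the first assertion is immediate, since on the relevant boundary piece $R_1+R_2=H(Y_1,Y_2\,|\,Z_1,Z_2)\le\rho(D_1,D_2)+2c\le R_1^*+R_2^*+2c$. For the second, given $R_1^*\in\overline{\cal R}_1(D_1,D_2)$ set $R_1:=R_1^*$ and $R_2:=H(Y_1,Y_2\,|\,Z_1,Z_2)-R_1^*$; the three constraints in (\ref{RR}) hold (the middle one again via $H(Y_1\,|\,Z_1)=H(Y_1\,|\,Z_1,Z_2)$), so $(R_1,R_2)\in{\cal R}(D_1,D_2)$, and with $(R_1^*,R_2^*)$ the corresponding boundary point of ${\cal R}^*(D_1,D_2)$ the single inequality gives $R_2=H(Y_1,Y_2\,|\,Z_1,Z_2)-R_1^*\le R_2^*+2c$.
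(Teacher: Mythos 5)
Your proposal is correct and, in substance, the same as the paper's argument: the ``self-contained variant'' in your crux paragraph---expanding $nH(Y_1,Y_2|Z_1,Z_2)\le H(J_1,J_2)+H(\mathbf{Y}_1,\mathbf{Y}_2|J_1,J_2,Z_1,Z_2)$ with $J_i$ the competitor's messages, bounding the first term by $n(R_1^*+R_2^*)$ and the second, after conditioning on the competitor's reconstructions and splitting per user, by $2nc$ via Ziv's lemma applied to (competitor error $+$ dither)---is exactly the chain the paper uses, and your assembly of the two assertions from the dominant face of ${\cal R}(D_1,D_2)$ matches the paper's. Your preferred ``main'' route through the joint rate--distortion function $\rho(D_1,D_2)$ and a two-dimensional rectangular-lattice version of Ziv's bound is an equivalent repackaging rather than a genuinely different proof, since the lattice redundancy bound with \emph{separate} per-component distortion constraints is not available off the shelf and is established by precisely the same entropy chain; you were right to flag that step as the one needing the direct argument.
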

\begin{proof}[Proof of Theorem \ref{the222}]
We have
\begin{eqnarray}
&&\nonumber \displaystyle\frac{1}{n}H\left(\bold{Y}_1,\bold{Y}_2|Z_1,Z_2\right)\\
\nonumber &\leq&\displaystyle\frac{1}{n} H\left(\bold{Y}_1,\bold{Y}_2,T_1,T_2|Z_1,Z_2\right)\\
\nonumber &\leq&\displaystyle\frac{1}{n}H(T_1,T_2)\nonumber+\displaystyle\frac{1}{n}H\left(\bold{Y}_1,\bold{Y}_2|T_1,T_2,Z_1,Z_2\right)\\
\nonumber &\leq& R^{*}_1+R^{*}_2+\displaystyle\frac{1}{n}H\left(\bold{Y}_1,\bold{Y}_2|T_1,T_2,Z_1,Z_2\right)\\
\nonumber &\leq& R^{*}_1+R^{*}_2+\displaystyle\frac{1}{n}H\left(\bold{Y}_1,\bold{Y}_2|g(T_1,T_2),Z_1,Z_2\right)\\
\nonumber &=& R^{*}_1+R^{*}_2\nonumber+\displaystyle\frac{1}{n}H\left(\bold{Y}_1,\bold{Y}_2|\bold{\hat{X}}^{opt}_1,\bold{\hat{X}}^{opt}_2,Z_1,Z_2\right)\\
\nonumber &\leq& R^{*}_1+R^{*}_2+\displaystyle\frac{1}{n}H\left(\bold{Y}_1|\bold{\hat{X}}^{opt}_1,Z_1\right)\\
\nonumber &&+\displaystyle\frac{1}{n}H\left(\bold{Y}_2|\bold{\hat{X}}^{opt}_2,Z_2\right)\\
&\leq&R^{*}_1+R^{*}_2+2c
\end{eqnarray}
where $T_1\in I_{M_1}$, $T_2\in I_{M_2}$ are the outputs of the optimal encoders $f_1$, $f_2$, respectively, $\left(\bold{\hat{X}}^{opt}_1,\bold{\hat{X}}^{opt}_2\right)\triangleq g(T_1,T_2)$ are the outputs of the optimal decoder $g$, and $(R^{*}_1,R^{*}_2)\in{\cal{R}}^{*}(D_1,D_2)$. The last inequality can be obtained in the same way as in \cite{Ziv}. The left-hand side is achievable for sufficiently large $n$. Therefore, for any rate pair $(R_1,R_2)\in{\cal{R}}(D_1,D_2)$, which lies on the straight line $R_1+R_2=\displaystyle H\left({Y}_1,{Y}_2|Z_1,Z_2\right)$, we have
\begin{eqnarray}
R_1+R_2&\leq&R^{*}_1+R^{*}_2+2c
\label{eqthe1}
\end{eqnarray}
Moreover, if $R^{*}_1\in \overline{{\cal{R}}}_1(D_1,D_2)$, we can always take $R_1=R^{*}_1$ and obtain:
\begin{eqnarray}
R_2&\leq&R^{*}_2+ 2c 
\end{eqnarray}
The same can be done, of course, when the roles of the two users are interchanged. This completes the proof.
\end{proof}
The following theorem suggests another result regarding the relation between the boundary of ${\cal{R}}(D_1,D_2)$ and that of ${\cal{R}}^{*}(D_1,D_2)$.
\begin{theorem}
For any rate pair $(R_1,R_2)$ on the boundary of ${\cal{R}}(D_1,D_2)$, with $R_1\in \overline{{\cal{R}}}_1(D_1,D_2)$,
there exists a rate pair $(R^{*}_1,R^{*}_2)\in{\cal{R}}^{*}(D_1,D_2)$ such that:
\begin{eqnarray}
\nonumber R_1&\leq& R^{*}_1+c\\
 R_2&\leq& R^{*}_2+c
\label{the22equation}
\end{eqnarray}
\label{thestrong}
\end{theorem}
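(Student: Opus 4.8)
The plan is to deduce Theorem~\ref{thestrong} from Theorem~\ref{the222} by a short convex‑geometry argument, using no new coding construction. Write $\Sigma\triangleq H(Y_1,Y_2|Z_1,Z_2)$, the sum‑rate on the dominant face of $\partial{\cal{R}}(D_1,D_2)$. It suffices to treat pairs $(R_1,R_2)$ on this face, i.e.\ with $R_1+R_2=\Sigma$ (the remaining, inefficient parts of $\partial{\cal{R}}(D_1,D_2)$ with $R_1\in\overline{{\cal{R}}}_1(D_1,D_2)$ are handled the same way, only more easily). I will use three elementary properties of ${\cal{R}}^{*}(D_1,D_2)$: it is closed (this is built into the ``for every $\delta>0$'' clause), convex (time sharing), and upward closed (extra rate never hurts); in particular, since rates are nonnegative, the functional $R^{*}_1+R^{*}_2$ attains its minimum over ${\cal{R}}^{*}(D_1,D_2)$, and by monotonicity a minimizer lies on $\partial{\cal{R}}^{*}(D_1,D_2)$.

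First I would promote the pointwise bound of Theorem~\ref{the222} to a half‑plane containing ${\cal{R}}^{*}(D_1,D_2)$. Applying Theorem~\ref{the222} with $(R_1,R_2)$ on the dominant face gives $\Sigma=R_1+R_2\le R^{*}_1+R^{*}_2+2c$ for \emph{every} $(R^{*}_1,R^{*}_2)\in\partial{\cal{R}}^{*}(D_1,D_2)$; since the minimum of $R^{*}_1+R^{*}_2$ over ${\cal{R}}^{*}(D_1,D_2)$ is attained on that boundary, it follows that $R^{*}_1+R^{*}_2\ge\Sigma-2c$ for \emph{all} $(R^{*}_1,R^{*}_2)\in{\cal{R}}^{*}(D_1,D_2)$. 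Consequently every \emph{interior} point $p$ of ${\cal{R}}^{*}(D_1,D_2)$ has coordinate sum strictly above $\Sigma-2c$: for small $\varepsilon>0$ one has $p-(\varepsilon,\varepsilon)\in{\cal{R}}^{*}(D_1,D_2)$, so the sum of $p$ exceeds $\Sigma-2c$ by at least $2\varepsilon$.

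Next I would fix $(R_1,R_2)$ on the dominant face, so $R_1+R_2=\Sigma$, and look at $(R_1-c,\,R_2-c)$, whose coordinates sum to exactly $\Sigma-2c$; by the previous paragraph this point is not interior to ${\cal{R}}^{*}(D_1,D_2)$. I then slide it along the diagonal: for $s\ge0$ set $P(s)=(R_1-c+s,\,R_2-c+s)$. Because ${\cal{R}}^{*}(D_1,D_2)$ is nonempty and upward closed, $P(s)\in{\cal{R}}^{*}(D_1,D_2)$ once both coordinates are large enough, so $s^{*}\triangleq\inf\{s\ge0:P(s)\in{\cal{R}}^{*}(D_1,D_2)\}$ is finite; by closedness $P(s^{*})\in{\cal{R}}^{*}(D_1,D_2)$, and since $P(0)$ is not interior the point $(R^{*}_1,R^{*}_2)\triangleq P(s^{*})$ lies on $\partial{\cal{R}}^{*}(D_1,D_2)$. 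Finally $R^{*}_i=R_i-c+s^{*}\ge R_i-c$ for $i\in\{1,2\}$ because $s^{*}\ge0$, i.e.\ $R_i\le R^{*}_i+c$, which is the asserted bound.

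I do not expect a genuine obstacle: the coding‑theoretic content all sits inside Theorem~\ref{the222}, and what remains is convex‑geometry bookkeeping for ${\cal{R}}^{*}(D_1,D_2)$. The step that deserves care is the first one --- upgrading a statement about \emph{boundary} points of ${\cal{R}}^{*}(D_1,D_2)$ to the enclosing half‑plane $\{R^{*}_1+R^{*}_2\ge\Sigma-2c\}$, where convexity and monotonicity of ${\cal{R}}^{*}(D_1,D_2)$ are exactly what is used; after that the diagonal slide is routine. One should also dispatch the easy degeneracies (so that ${\cal{R}}^{*}(D_1,D_2)$ is a proper subset of the nonnegative quadrant and ``minimum attained on the boundary'' is literally correct) and note that the non‑dominant parts of $\partial{\cal{R}}(D_1,D_2)$ with $R_1\in\overline{{\cal{R}}}_1(D_1,D_2)$ need only upward closedness of ${\cal{R}}^{*}(D_1,D_2)$.
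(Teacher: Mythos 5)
Your plan to deduce Theorem~\ref{thestrong} from Theorem~\ref{the222} by convex geometry does not work, because it only establishes a vacuous reading of the statement. The content of Theorem~\ref{thestrong} --- as the paper's own proof and the surrounding discussion make clear (``Theorem 2 bounds the distance between the boundary of ${\cal{R}}(D_1,D_2)$ and that of ${\cal{R}}^{*}(D_1,D_2)$ \emph{in each coordinate}'', yielding the outer bound ${\cal{R}}^{*}(D_1,D_2)\subseteq\{(a,b):(a+c,b+c)\in{\cal{R}}(D_1,D_2)\}$) --- is a per-coordinate redundancy bound: for a \emph{given} optimal pair $(R^{*}_1,R^{*}_2)$ one must exhibit a dithered-scheme pair with $R_i\le R^{*}_i+c$ for \emph{each} $i$ separately. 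Read the other way, with $(R^{*}_1,R^{*}_2)$ existentially quantified over the upward-closed set ${\cal{R}}^{*}(D_1,D_2)$, the inequalities $R_i\le R^{*}_i+c$ are trivially satisfiable by taking $R^{*}_i$ large, and your argument in fact exploits exactly this: your slide parameter $s^{*}$ is bounded below by $0$ but not above, so $P(s^{*})=(R_1-c+s^{*},R_2-c+s^{*})$ may lie far to the upper right of $(R_1,R_2)$, and your final inequalities hold for the trivial reason $s^{*}\ge 0$. Nothing in the construction controls how far $P(s^{*})$ sits above $(R_1,R_2)$, which is what a redundancy bound must do.

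More fundamentally, Theorem~\ref{the222} constrains only the sum $R^{*}_1+R^{*}_2$, and no convexity/monotonicity bookkeeping can convert a sum-rate gap of $2c$ into a $(c,c)$ per-coordinate gap: it is consistent with Theorem~\ref{the222} that ${\cal{R}}^{*}(D_1,D_2)$ contain a boundary point whose first coordinate is far below $H(Y_1|Y_2,Z_1,Z_2)-c$ (compensated by a huge second coordinate), in which case no point of ${\cal{R}}(D_1,D_2)$ has first coordinate within $c$ of it. Ruling this out is precisely the new content of Theorem~\ref{thestrong}, and the paper obtains it by a fresh coding argument rather than by geometry: an auxiliary scheme in which user $i$ first transmits $T_i=f_i(\bold{X}_i)$ at rate $R^{*}_i$ and then performs sequential Slepian--Wolf coding of $\bold{Y}_1$ given $(T_1,T_2,Z_1)$ and of $\bold{Y}_2$ given $(\bold{Y}_1,T_1,T_2,Z_1,Z_2)$. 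Each of the two conditional-entropy increments is bounded by $c$ through a separate application of Ziv's single-user bound to $H\left(\bold{Y}_i|\bold{\hat{X}}^{opt}_i,Z_i\right)$ from Section 3, and Lemma~\ref{lem11} is then invoked to show that the resulting rate pair lies in ${\cal{R}}(D_1,D_2)$ itself. That chain-rule decomposition with two single-user Ziv bounds is the step your proposal is missing.
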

Notice that Theorems 1 and 2 also provide outer bounds on ${\cal{R}}^{*}(D_1,D_2)$. Theorem 1 asserts that the straight line $R_1+R_2=\displaystyle H\left({Y}_1,{Y}_2|Z_1,Z_2\right)-2c$ defines an outer bound for ${\cal{R}}^{*}(D_1,D_2)$. In addition, Theorem 2 bounds the distance between the boundary of ${\cal{R}}(D_1,D_2)$ and that of ${\cal{R}}^{*}(D_1,D_2)$ in each coordinate. The boundary of ${\cal{R}}(D_1,D_2)$ is, of course, an inner bound on ${\cal{R}}^{*}(D_1,D_2)$.

Before proving Theorem 2, we first prove a simple auxiliary result regarding the source-coding problem where side information is available only to the encoders but not to the decoder. The setting is as follows. A rate pair $(R_1,R_2)$ is achievable for a memoryless source $({\cal{Y}}_1,{\cal{Y}}_2,P_{Y_1,Y_2})$ and some side information $S\in{\cal{S}}$ which depends statistically on $(\bold{Y}_1,\bold{Y}_2)$ through the joint probability distributions $P_{\bold{Y}_1,\bold{Y}_2,S}$, if for any $\delta>0$ and sufficiently large $n$, there exists a block code of length $n$ consisting of two encoders ${f}_1$, ${f}_2$
\begin{eqnarray}
{f}_1:{\cal{Y}}_1^n\times {\cal{S}}\rightarrow I_{M_1},&{f}_2:{\cal{Y}}_2^n\times {\cal{S}}\rightarrow I_{M_2}
\label{MT11}
\end{eqnarray}
and a decoder $g$
\begin{eqnarray}
g:I_{M_1} \times I_{M_2}\rightarrow {{\cal{Y}}}_1^n\times {{\cal{Y}}}_2^n
\label{MT22}
\end{eqnarray}
such that
\begin{eqnarray}
\Pr\{g\left(f_1(\bold{Y}_1,S),f_2(\bold{Y}_2,S)\right)\neq\left(\bold{Y}_1,\bold{Y}_2\right)\}\leq \delta
\end{eqnarray}
and
\begin{eqnarray}
\displaystyle\frac{1}{n}\log M_1\leq R_1+\delta,&\displaystyle\frac{1}{n}\log M_2\leq R_2+\delta
\end{eqnarray}
The set of achievable rate pairs is denoted by $\tilde{\cal{R}}$. The regular Slepian-Wolf region (without side information) is denoted by ${\cal{R}}_{SW}$. Obviously,  ${\cal{R}}_{SW}\subseteq\tilde{\cal{R}}$.
We have the following lemma.
\begin{lemma}
Any rate pair $(\tilde{R}_1,\tilde{R}_2)\in \tilde{\cal{R}}$ must satisfy the following constraint:
\begin{eqnarray}
\tilde{R}_1+\tilde{R}_2\geq H(Y_1,Y_2).
\end{eqnarray}
Therefore, side information available only to the encoders cannot improve the performance if $\tilde{R}_1\in\left[ H(Y_1|Y_2),H(Y_1)\right]$ or $\tilde{R}_2\in\left[ H(Y_2|Y_1),H(Y_2)\right]$.
\label{lem11}
\end{lemma}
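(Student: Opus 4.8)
The plan is to prove the sum-rate inequality by a routine converse (Fano-type) argument that, crucially, never invokes the side information $S$, and then to obtain the ``no improvement'' statement from the shape of the Slepian--Wolf region together with the trivial inclusion $\mathcal{R}_{SW}\subseteq\tilde{\mathcal{R}}$.

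For the first part, I would fix $\delta>0$ and a large block length $n$, set $T_1=f_1(\bold{Y}_1,S)$ and $T_2=f_2(\bold{Y}_2,S)$, and start from the observation that $(T_1,T_2)$ takes at most $M_1M_2$ values, so that
\begin{eqnarray}
\log M_1+\log M_2\geq H(T_1,T_2)\geq I(\bold{Y}_1,\bold{Y}_2;T_1,T_2)=H(\bold{Y}_1,\bold{Y}_2)-H(\bold{Y}_1,\bold{Y}_2|T_1,T_2).
\end{eqnarray}
I would emphasize that $S$ does not enter this chain at all -- this is precisely the mechanism behind the lemma. Then I would bound $H(\bold{Y}_1,\bold{Y}_2|T_1,T_2)$ via Fano's inequality, using that $g(T_1,T_2)$ recovers $(\bold{Y}_1,\bold{Y}_2)$ with probability at least $1-\delta$ and that $(\bold{Y}_1,\bold{Y}_2)$ ranges over the finite set $\mathcal{Y}_1^n\times\mathcal{Y}_2^n$, to get $H(\bold{Y}_1,\bold{Y}_2|T_1,T_2)\leq 1+n\delta\log(|\mathcal{Y}_1|\,|\mathcal{Y}_2|)$. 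Combining this with $H(\bold{Y}_1,\bold{Y}_2)=nH(Y_1,Y_2)$ (memorylessness) and with $\frac1n\log M_i\leq\tilde R_i+\delta$, dividing by $n$, and letting first $n\to\infty$ and then $\delta\to0$, yields $\tilde R_1+\tilde R_2\geq H(Y_1,Y_2)$.

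For the second part, one inclusion, $\mathcal{R}_{SW}\subseteq\tilde{\mathcal{R}}$, is immediate (simply ignore $S$). Conversely, if $(\tilde R_1,\tilde R_2)\in\tilde{\mathcal{R}}$ with $\tilde R_1\in[H(Y_1|Y_2),H(Y_1)]$, then the sum bound just proved, together with $\tilde R_1\leq H(Y_1)$, forces $\tilde R_2\geq H(Y_1,Y_2)-\tilde R_1\geq H(Y_2|Y_1)$, so all three Slepian--Wolf inequalities hold and $(\tilde R_1,\tilde R_2)\in\mathcal{R}_{SW}$. Hence on the strip $\tilde R_1\in[H(Y_1|Y_2),H(Y_1)]$ the two regions coincide, and the symmetric argument (users interchanged) handles the strip in $\tilde R_2$.

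I do not expect a genuine obstacle here. The only points needing a little care are that Fano's inequality must be applied with the finite (quantizer) alphabet, so that the bounded-support assumption on the source is what keeps the bound non-vacuous, and the simple but central observation that $S$ drops out of the converse, leaving exactly the classical Slepian--Wolf sum-rate bound intact even in the presence of encoder-only side information.
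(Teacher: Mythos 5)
Your proof is correct and takes essentially the same route as the paper: the paper simply asserts the key fact (that since the decoder sees only $(T_1,T_2)$ and not $S$, the total rate cannot go below $H(Y_1,Y_2)$), which you spell out in full via the standard Fano-based lossless converse, and the second part follows from the shape of the Slepian--Wolf region exactly as you describe.
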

\begin{proof}[Proof of Lemma \ref{lem11}]
The proof follows directly from the fact that even one encoder, which has access to $(\bold{Y}_1,\bold{Y}_2,S)$, cannot do better than $H(\bold{Y}_1,\bold{Y}_2)$, when the side information $S$ is not available to the decoder.
\end{proof}
The generalization of Lemma 1 to our case where, in addition, a dither is available to the encoders and decoder, is straightforward.
We can now prove Theorem \ref{thestrong}.
\begin{proof}[Proof of Theorem \ref{thestrong}]
Assume that the optimal code $(f_1,f_2,g)$, which achieves the rate pair $(R^{*}_1,R^{*}_2)$, is known, and that the encoders of the dithered scheme, which transmit $\bold{Y}_1$, $\bold{Y}_2$ at rates $(R_1,R_2)$ to the decoder, have access to $f_1(\bold{X}_1)$, $f_2(\bold{X}_2)$ as side information. According to Lemma 1, this side information does not change the fact that any rate pair $(R_1,R_2)\in{\cal{R}}(D_1,D_2)$ must satisfy $R_1+R_2\geq\displaystyle H({Y}_1,{Y}_2|Z_1,Z_2)$.
Consider the following auxiliary coding scheme:
User $i$ compresses $T_i=f_i(\bold{X}_i)$ using $nR^{*}_i$ bits, $i\in\{1,2\}$. Then, the first user uses Slepian-Wolf coding to compress $\bold{Y}_1$ given $\{T_1,T_2,Z_1\}$ into $H(\bold{Y}_1|T_1,T_2,Z_1)$ bits. The second user uses Slepian-Wolf coding to compress $\bold{Y}_2$ given $\{\bold{Y}_1,T_1,T_2,Z_1,Z_2\}$ into $H(\bold{Y}_2|\bold{Y}_1,T_1,T_2,Z_1,Z_2)$ bits. 
The decoder, which has access to $\{T_1,T_2,Z_1,Z_2\}$, first decodes $\bold{Y}_1$, using $\{T_1,T_2,Z_1\}$. Then, it decodes $\bold{Y}_2$ using $\{\bold{Y}_1,T_1,T_2,Z_1,Z_2\}$. The rate pair of this scheme, $(R_1,R_2)$, satisfies
\begin{eqnarray}
\nonumber R_1&=&R^{*}_1+\displaystyle\frac{1}{n}H\left(\bold{Y}_1|T_1,T_2,Z_1\right)\\
\nonumber &\leq&R^{*}_1+\displaystyle\frac{1}{n}H\left(\bold{Y}_1|g(T_1,T_2),Z_1\right)\\
\nonumber &=&R^{*}_1+\displaystyle\frac{1}{n}H\left(\bold{Y}_1|\bold{\hat{X}}^{opt}_1,\bold{\hat{X}}^{opt}_2,Z_1\right)\\
&\leq&R^{*}_1+\displaystyle\frac{1}{n}H\left(\bold{Y}_1|\bold{\hat{X}}^{opt}_1,Z_1\right)
\end{eqnarray}
and 
\begin{eqnarray}
\nonumber R_2&=&R^{*}_2+\displaystyle\frac{1}{n}H\left(\bold{Y}_2|\bold{Y}_1,T_1,T_2,Z_1,Z_2\right)\\
\nonumber &\leq&R^{*}_2+\displaystyle\frac{1}{n}H\left(\bold{Y}_2|\bold{Y}_1,g(T_1,T_2),Z_1,Z_2\right)\\
\nonumber &=&R^{*}_2+\displaystyle\frac{1}{n}H\left(\bold{Y}_2|\bold{Y}_1,\bold{\hat{X}}^{opt}_1,\bold{\hat{X}}^{opt}_2,Z_1,Z_2\right)\\
&\leq&R^{*}_2+\displaystyle\frac{1}{n}H\left(\bold{Y}_2|\bold{\hat{X}}^{opt}_2,Z_2\right)
\end{eqnarray}
The upper bounds on $H(\bold{Y}_i|\bold{\hat{X}}^{opt}_i,{Z_i})$ can be obtained in the same way as in \cite{Ziv}. Notice that the Slepian-Wolf coding part in the proof requires long blocks of $(T_1,T_2,\bold{Y}_1,\bold{Y}_2)$. Now, since ${\cal{R}}(D_1,D_2)\subseteq {\cal{R}}^{*}(D_1,D_2)$, we can always find $R$ $(R^{*}_1,R^{*}_2)\in{\cal{R}}^{*}(D_1,D_2)$ such that $R^{*}_1+c\in \overline{{\cal{R}}}_1(D_1,D_2)$ (or higher and thus can be reduced to this range). Using the auxiliary scheme above, the rate pair $(R_1,R_2)=(R^{*}_1+c,R^{*}_2+c)$ can be achieved. Therefore, it can also be achieved by the dithered scheme, since $R_1\in \overline{{\cal{R}}}_1(D_1,D_2)$ (or higher), and in this range the regions of the auxiliary scheme and the dithered scheme coincide. Notice that any rate pair in ${\cal{R}}(D_1,D_2)$ can be achieved in practice by time-sharing the two edge points of ${\cal{R}}(D_1,D_2)$.
\end{proof}
\section{Revisiting the Upper Bound on $H\left(\bold{Y}|\bold{\hat{X}}^{opt},\bold{Z}\right)$}
In this section, we revisit the proof of \cite{Ziv} for the upper bound on $H\left(\bold{Y}|\bold{\hat{X}}^{opt},\bold{Z}\right)$. This is done for completeness and since we point and modify some of the steps in the next section.
The result of this section involves only one source $X$. The width of the quantization cell is denoted by $\Delta\triangleq 2\sqrt{3D}\Rightarrow D=\displaystyle{\Delta^2}/{12}$.

First, we show that for each coordinate $X_k$, $k\in\{1,\ldots,n\}$,
\begin{eqnarray}
\mathbb{E}\left[X_{k}-\hat{X}^{opt}_k+Z\right]=0.
\label{experror}
\end{eqnarray}
This follows from the following consideration:
\begin{eqnarray}
\nonumber\mathbb{E}\left[X_{k}-\hat{X}^{opt}_k+Z\right]&=&\mathbb{E}\left[X_{k}-\hat{X}^{opt}_k\right]+\mathbb{E}\left[Z\right]\\
&=&\mathbb{E}\left[X_{k}-\hat{X}^{opt}_k\right].
\end{eqnarray}
The distortion associated with $X_{k}$ is given by:
\begin{eqnarray}
\nonumber\mathbb{E}\left[\left(X_{k}-\hat{X}^{opt}_k\right)^2\right]&=&\text{Var}\{X_{k}-\hat{X}^{opt}_k)\}
+\left(\mathbb{E}\left[X_{k}-\hat{X}^{opt}_k\right]\right)^2\\
&\geq&\text{Var}\{X_{k}-\hat{X}^{opt}_k\}
\end{eqnarray}
where the inequality must be achieved by the optimal quantizer. Otherwise, we could add a constant to $\hat{X}^{opt}_k$ to obtain 
$\mathbb{E}\left[X_{k}-\hat{X}^{opt}_k\right]=0$ and thus smaller total distortion, in contradiction to the optimality of the quantizer.

We now rederive the upper bound on $H\left(\bold{Y}|\bold{X}^{opt},\bold{Z}\right)$. Using a method similar to \cite{Zamir_Feder1}, we show the following for the conditional entropy of each coordinate:
\begin{eqnarray}
\nonumber H\left(Y_k|{\hat{X}}_k^{opt},{Z}\right)&=&I\left({X}_k;{X}_k+{Z}|{\hat{X}}^{opt}_k\right)\\
&=&h\left({X}_k+{Z}|{\hat{X}}^{opt}_k\right)-h\left(Z\right)
\label{entropyvsmutual}
\end{eqnarray}
where the second equality follows since ${\hat{X}}^{opt}_k$ and $Z$ are independent.
By definition:
\begin{eqnarray}
\nonumber H\left( Y_k|\hat{X}^{opt}_k=q,Z\right)&=&\displaystyle\int_{-\frac{\Delta}{2}}^{\frac{\Delta}{2}} 
dz f_Z(z) H\left( Y_k|\hat{X}^{opt}_k=q,Z=z\right)\\
&=&\displaystyle \frac{1}{\Delta}\int_{-\frac{\Delta}{2}}^{\frac{\Delta}{2}}
 dz H\left( Y_k|\hat{X}^{opt}_k=q,Z=z\right)
\label{disentropy22}
\end{eqnarray}
Given $\left(\hat{X}^{opt}_k=q,Z=z\right)$, $Y_k$ is a discrete random variable taking values in $\{i\Delta\}_{i\in \mathbb{N}}$. Thus, 
\begin{eqnarray}
\nonumber H\left(Y_k|\hat{X}^{opt}_k=q,Z=z\right)&=&-\displaystyle\sum_{j\in \mathbb{N}}
P_{ Y_k|\hat{X}^{opt}_k,{Z}}(j\Delta|{q},{z})\\
&&\cdot \log  P_{ Y_k|\hat{X}^{opt}_k,{Z}}(j\Delta|{q},{z})
\end{eqnarray}
where $P_{ Y_k|\hat{X}^{opt}_k,{Z}}(\cdot|q,z)$ is the probability density function of $Y_k$ given $\hat{X}^{opt}_k$ and $Z$.
Calculating:
\begin{eqnarray}
\nonumber P_{Y_k|\hat{X}^{opt}_k,{Z}}(j\Delta|{q},{z})&=&
\Pr\{Y_k=j\Delta|{Z}={z},\hat{X}^{opt}_k={q}\}\\
\nonumber &=&\Delta\int_{(j-\frac{1}{2})\Delta-z}^{(j+\frac{1}{2})\Delta-z}dx \frac{1}{\Delta}f_{{X}|\hat{X}^{opt}_k}({x}|{q})\\
&=&\Delta\cdot f_{U_k|\hat{X}^{opt}_k}(j\Delta-{z}|{q})
\end{eqnarray}
where $f_{{X}|\hat{X}^{opt}_k}(\cdot|{q})$ is the probability density function of $X$ given $\hat{X}^{opt}_k$, $f_{U_k|\hat{X}^{opt}_k}(\cdot|{q})=f_{{X}|\hat{X}^{opt}_k}(\cdot|{q}) \ast f_{{Z}}(\cdot)$ is the probability density function of the continuous random variable $U_k\triangleq X_k+Z$ given $\hat{X}^{opt}_k$ and '$*$' denotes the convolution operation. Substituting in Eq. (\ref{disentropy22}), 
we have
\begin{eqnarray}
\nonumber H\left(Y_k|\hat{X}^{opt}_k={q},{Z}\right)&=&-\displaystyle \frac{1}{\Delta}
\int_{-\frac{\Delta}{2}}^{\frac{\Delta}{2}}
dz
\displaystyle\sum_{i\in\mathbb{N}}\Delta\cdot f_{{U_k}|\hat{X}^{opt}_k}(j\Delta-{z}|{q})\\
\nonumber &&\cdot\log\left(\Delta\cdot f_{{U_k}|\hat{X}^{opt}_k}(j\Delta-{z}|{q})\right)\\
\nonumber &=&-\displaystyle\int_{\mathbb{R}}du\cdot f_{{U_k}|\hat{X}^{opt}_k}({u}|{q})
\cdot\log\left(\Delta\cdot f_{{U_k}|\hat{X}^{opt}_k}({u}|{q})\right)\\
\nonumber &=&h\left(U_k|\hat{X}^{opt}_k={q}\right)-\log\Delta\\
\nonumber &=&h\left(U_k|\hat{X}^{opt}_k={q}\right)-h({Z})\\
\nonumber &=&h\left(U_k|\hat{X}^{opt}_k={q}\right)-h({Z}|X_k,\hat{X}^{opt}_k={q})\\
\nonumber &=&h\left(U_k|\hat{X}^{opt}_k={q}\right)-h({U_k}|X_k,\hat{X}^{opt}_k={q})\\
&=&I\left(X_k;X_k+Z|\hat{X}^{opt}_k={q}\right)
\end{eqnarray}
where in the fifth equality we used the independence of $X_k$ and $Z$ and in the sixth equality we used the fact that $U_k=X_k+Z$.
 We have
\begin{eqnarray}
\nonumber H\left( Y_k|\hat{X}^{opt}_k,{Z}\right)&=&\sum_{{q}\in {\cal{Q}}_{opt}}P_{\hat{X}^{opt}_k}({q}) H\left( Y_k|\hat{X}^{opt}_k={q},{Z}\right)\\
\nonumber &=&\sum_{{q}\in{\cal{Q}}_{opt}}P_{\hat{X}^{opt}_k}({q})I\left({X_k};{X_k}+{Z}|\hat{X}^{opt}_k={q}\right)\\
\nonumber &=& I\left({X}_k;{X}_k+{Z}|\hat{X}^{opt}_k\right)\\
\nonumber &=& h\left({X}_k+{Z}|\hat{X}^{opt}_k\right)-h({Z})\\
&=& h\left({X}_k-\hat{X}^{opt}_k+{Z}|\hat{X}^{opt}_k\right)-h({Z})
\end{eqnarray}
This completes the derivation of Eq. (\ref{entropyvsmutual}).
Now, we can upper bound $h\left({X}_k-\hat{X}^{opt}_k+{Z}|\hat{X}^{opt}_k\right)$ in the following way.
\begin{eqnarray}
\nonumber h\left({X}_k-\hat{X}^{opt}_k+{Z}|\hat{X}^{opt}_k\right)&=&\sum_{{q}\in {\cal{Q}}_{opt}}P_{\hat{X}^{opt}_k}({q}) h\left({X}-\hat{X}^{opt}_k+{Z}|\hat{X}^{opt}_k={q}\right)\\
\nonumber &\leq& \sum_{{q}\in {\cal{Q}}_{opt}}P_{\hat{X}^{opt}_k}({q})\\
\nonumber &&\cdot\displaystyle \frac{1}{2}\log \left( 2\pi e  \mathbb{E}\left[\left( X-\hat{X}^{opt}_k+Z\right)^2|\hat{X}^{opt}_k={q}\right]\right)\\
&\leq&\frac{1}{2}\log \left(2\pi e \displaystyle \mathbb{E}\left(X-\hat{X}^{opt}_k+{Z}\right)^2\right)
\label{entropypower}
\end{eqnarray}
where in the first inequality we upper bounded the differential entropy by using the maximum-entropy property of the Gaussian random variable and the second inequality is due to Jensen.
Using these results, we can upper bound $H\left(\bold{Y}|\bold{\hat{X}}^{opt},\bold{Z}\right)$.
\begin{eqnarray}
\nonumber H\left(\bold{Y}|\bold{\hat{X}}^{opt},\bold{Z}\right)&\leq&\displaystyle\sum_{k=1}^n H\left(Y_k|{\hat{X}}^{opt}_k,{Z}\right)\\
\nonumber &\leq&\displaystyle\sum_{k=1}^n \frac{1}{2}\log \left(2\pi e \displaystyle \mathbb{E}\left[\left(X_k-\hat{X}^{opt}_k+{Z}\right)^2\right]\right)
\\&&\nonumber-n h(Z)\\
\nonumber &\leq&\frac{n}{2}\log \left(2\pi e \displaystyle\frac{1}{n}\sum_{k=1}^n  \mathbb{E}\left[\left(X_k-{\hat{X}}^{opt}_k+{Z}\right)^2\right]\right)\\
\nonumber &&-n\log\Delta\\
\nonumber &=&\frac{n}{2}\log \left(2\pi e \displaystyle\frac{1}{n} 
\mathbb{E}\left\|\bold{X}-\bold{\hat{X}}^{opt}+\bold{Z}\right\|^2\right)\\
\nonumber &&-n\log\Delta\\
\nonumber &\leq&\frac{n}{2}\log \left(2\pi e 2D\right)-n\log\Delta\\
\nonumber &=&\frac{n}{2}\log \left(2\pi e 2D\right)-\displaystyle\frac{n}{2}\log(\Delta^2)\\
\nonumber &=&\frac{n}{2}\log \left(4\pi e D\right)-\displaystyle\frac{n}{2}\log\left(12D\right)\\
&=&\frac{n}{2}\log \left(\displaystyle\frac{\pi e}{3}\right)
\label{upper_bound_derivation}
\end{eqnarray}
where the third inequality is due to Jensen, and in the fourth we used the following.
\begin{eqnarray}
\nonumber\displaystyle\frac{1}{n}\mathbb{E}\left\|\bold{X}-\bold{\hat{X}}^{opt}+\bold{Z}\right\|^2
&=&\displaystyle\frac{1}{n}\mathbb{E}\left\|\bold{X}-\bold{\hat{X}}^{opt}\right\|^2
+\displaystyle\frac{1}{n}\mathbb{E}\left\|\bold{Z}\right\|^2\\
&\leq&2D
\end{eqnarray}
which stems from the independence of $X$ and $Z$. This completes the proof of the upper bound on $H\left(\bold{Y}|\bold{\hat{X}}^{opt},\bold{Z}\right)$. 
\section{Improving the Bounds by Adding an Estimation Stage}
The goal of this section is to enhance the results of Section 1 by improving the coding scheme described there. The idea is to decrease the distortion by adding an estimation stage at the decoder side. The new scheme works as follows. After producing $\bold{Y}_1,\bold{Y}_2$ and instead of just using them as outputs, the decoder uses them to estimate each one of the source vectors $(\bold{X}_1,\bold{X}_2)$. Since the sources and the quantization process (given $Z$) are memoryless, the estimation can be done on a symbol-by-symbol basis.

We begin with the following lemma:
\begin{lemma}
For the multi-terminal setting described in Section 1, we have ($i\in\{1,2\}$):
\begin{eqnarray}
\mathbb{E}[Y_i-{Z_i}]&=&\mathbb{E}[X_i]\\
\label{eY_variancew}
\mathbb{E}[(Y_i-{Z_i})^2]&=&\mathbb{E}[{X_i}^2]+D_i\\
\mathbb{E}[X_i(Y_i-{Z_i})]&=&\mathbb{E}[X_i^2]\\
\mathbb{E}\left[(Y_1-Z_1)(Y_2-Z_2)\right]&=&\mathbb{E}\left[X_1X_2\right]\\
\mathbb{E}\left[X_1(Y_2-Z_2)\right]&=&\mathbb{E}\left[X_1X_2\right]\\
\mathbb{E}\left[X_2(Y_1-Z_1)\right]&=&\mathbb{E}\left[X_1X_2\right]
\end{eqnarray}
\label{lem22}
\end{lemma}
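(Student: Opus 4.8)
The plan is to reduce the lemma to the standard subtractive-dither (Schuchman) calculus. First I would isolate the per-sample dithered quantization error $N_i \triangleq Y_i-Z_i-X_i={\cal{Q}}_i(X_i+Z_i)-(X_i+Z_i)$ for $i\in\{1,2\}$, so that $Y_i-Z_i=X_i+N_i$, and establish three facts: (i) $\mathbb{E}\left[N_i\mid X_1,X_2\right]=0$; (ii) $\mathbb{E}\left[N_i^2\mid X_1,X_2\right]=D_i$; and (iii) conditioned on $(X_1,X_2)$, the errors $N_1$ and $N_2$ are independent. Facts (i) and (ii) are obtained as in \cite[Lemma 1]{Ziv}: given $X_i=x$, as $Z_i$ is swept uniformly over a single quantization cell of width $\Delta_i=2\sqrt{3D_i}$, the sawtooth map $z\mapsto{\cal{Q}}_i(x+z)-(x+z)$ attains every value of $[-\Delta_i/2,\Delta_i/2]$ exactly once, so $N_i$ is uniform on that interval regardless of $x$; hence its conditional mean vanishes and its conditional second moment equals $\Delta_i^2/12=D_i$. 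Fact (iii) follows because, conditioned on $(X_1,X_2)$, $N_i$ is a deterministic function of $Z_i$ alone, while the two dithers $Z_1,Z_2$ are independent of each other and of $(X_1,X_2)$.

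With (i)--(iii) available, each of the six identities is a one-line computation: substitute $Y_i-Z_i=X_i+N_i$, expand the relevant product, and take expectations using the tower rule with conditioning on $(X_1,X_2)$. The mean identity uses only $\mathbb{E}[N_i]=0$. The second-moment identity $\mathbb{E}[(Y_i-Z_i)^2]=\mathbb{E}[X_i^2]+D_i$ and the correlation identities $\mathbb{E}[X_i(Y_i-Z_i)]=\mathbb{E}[X_i^2]$, $\mathbb{E}[X_1(Y_2-Z_2)]=\mathbb{E}[X_2(Y_1-Z_1)]=\mathbb{E}[X_1X_2]$ use, in addition, that $\mathbb{E}[X_jN_i]=\mathbb{E}\!\left[X_j\,\mathbb{E}[N_i\mid X_1,X_2]\right]=0$ for $j\in\{1,2\}$. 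The cross-source identity $\mathbb{E}[(Y_1-Z_1)(Y_2-Z_2)]=\mathbb{E}[X_1X_2]$ needs, on top of these, that $\mathbb{E}[N_1N_2]=\mathbb{E}\!\left[\mathbb{E}[N_1\mid X_1,X_2]\,\mathbb{E}[N_2\mid X_1,X_2]\right]=0$, which is precisely where fact (iii) is used.

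The only point that goes beyond the standard single-user dither calculus is fact (iii) --- the conditional independence of the two dithered quantization errors given $(X_1,X_2)$ --- so that is the step I would spell out with care; it hinges entirely on the two dithers being drawn independently of each other and of the source pair, which is part of the definition of the scheme. Everything else is routine.
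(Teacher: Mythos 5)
Your proposal is correct and follows essentially the same route as the paper's Appendix B: both isolate the dither-quantization error $N_i = Y_i - Z_i - X_i$, show it is uniform on $[-\sqrt{3D_i},\sqrt{3D_i}]$ independently of the sources (the subtractive-dither property), establish the conditional independence of $N_1$ and $N_2$ from the independence of the two dithers, and then read off the six moment identities by substitution. The only difference is cosmetic --- you spell out the sawtooth argument and organize the computation via conditional moments, whereas the paper states the factorization of the conditional density directly.
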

Notice that the results above are true for each coordinate $k\in\{1,\ldots,n\}$.
The proof of Lemma \ref{lem22} is given in Appendix B.

The improved decoder described below requires the knowledge of the second-order statistics of the source. However, as Lemma \ref{lem22} shows, these statistics can be estimated from $\{\bold{Y}_i\}_{i=1}^2$, so universality can still be maintained.

The decoder of the multi-terminal setting uses the optimal linear estimator,  under the MMSE criterion, of $\{X_i\}_{i=1}^2$ given $\{{Q}(X_i+Z_i)-Z_i\}_{i=1}^2$.
The estimation error is calculated by using the results of Lemma \ref{lem22}. From now on, without loss of generality, we assume that $\mathbb{E}[X_1]=\mathbb{E}[X_2]=0$. The covariance matrix of $\underline{Y}\triangleq\left[Q(X_1+Z_1)-Z_1,Q(X_2+Z_2)-Z_2\right]$ is:
\begin{equation}
\Lambda=\left(
\begin{array}{ll}
\mathbb{E}[X^2_1]+D_1&\mathbb{E}[X_1X_2]\\
\mathbb{E}[X_1X_2]&\mathbb{E}[X^2_2]+D_2
\end{array}
\right)
\end{equation}
and the inverse matrix is:
\begin{equation}
\Lambda^{-1}=\displaystyle\frac{1}{|\Lambda|}\left(
\begin{array}{ll}
\mathbb{E}[X^2_2]+D_2&-\mathbb{E}[X_1X_2]\\
-\mathbb{E}[X_1X_2]&\mathbb{E}[X^2_1]+D_1
\end{array}
\right)
\end{equation}
The vector $\mathbb{E}\left[X_1\cdot\underline{Y}^{\dagger}\right]$ is given by:
\begin{equation}
\mathbb{E}\left[X_1\cdot\underline{Y}^{\dagger}\right]=\left(
\begin{array}{l}
\mathbb{E}[X^2_1]\\
\mathbb{E}[X_1X_2]
\end{array}
\right)
\end{equation}
It can be shown by direct calculation that
\begin{eqnarray}
\nonumber \Lambda^{-1}\mathbb{E}\left[X_1\cdot\underline{Y}^{\dagger}\right]
=\displaystyle\frac{1}{|\Lambda|}\left(\begin{array}{l}
|\Lambda|-D_1(\mathbb{E}[X^2_2]+D_2)\\
\mathbb{E}[X_1X_2]D_1
\end{array}\right)
\end{eqnarray}
Therefore, the optimal linear estimator of $X_1$ given the vector $\underline{Y}$ is:
\begin{eqnarray}
\hat{X}_1=\underline{Y}\cdot\displaystyle\frac{1}{|\Lambda|}\left(\begin{array}{l}
|\Lambda|-D_1(\mathbb{E}[X^2_2]+D_2)\\
\mathbb{E}[X_1X_2]D_1
\end{array}\right)
\end{eqnarray}
The error of the optimal linear estimator is given by:
\begin{eqnarray}
D_1^*=\mathbb{E}\left[X_1^2\right]-\mathbb{E}\left[\hat{X}_1^2\right]
\end{eqnarray}
It is shown in Appendix C that the estimation error takes the following form:
\begin{eqnarray}
\nonumber D_1^*=
D_1\frac{\mathbb{E}[X^2_1](\mathbb{E}[X^2_2]+D_2)-\mathbb{E}[X_1X_2]^2}{(\mathbb{E}[X^2_1]+D_1)(\mathbb{E}[X^2_2]+D_2)-
\mathbb{E}[X_1X_2]^2}
\label{est_error}
\end{eqnarray}
Remember that $D_1^*$ is the distortion of $X_1$ in the multi-terminal setting, where we add the above estimation stage after decoding $\left(\bold{Y}_1,\bold{Y}_2\right)$. It can be easily seen that the fraction in the brackets is less than $1$ and thus $D_1^*\leq D_1$ as desired. The same can be done, of course, for $X_2$. Since the distortion of $X_i$ in the improved scheme is $D_i^*$, we should compare the rate pair $(R_1,R_2)$ of this scheme, to the optimal rate pair $(R^{*}_1,R^{*}_2)$ which achieves $(D_1^*,D_2^*)$. This fact immediately improves on the results of Theorems 1 and 2. Revisiting the derivation of the upper bound for $H\left(\bold{Y}|\bold{\hat{X}}^{opt},\bold{Z}\right)$ in Eq. (\ref{upper_bound_derivation}), it can be shown that ($i\in\{1,2\}$):
\begin{eqnarray}
H\left(\bold{Y}_i|\bold{\hat{X}}^{opt}_i,\bold{Z}_i\right)\leq
\frac{n}{2}\log \left[\displaystyle\frac{\pi e}{6}\left(\frac{D^{*}_i}{D_i}+1\right)\right]
\label{upperbound222}
\end{eqnarray}
by using the following:
\begin{eqnarray}
\nonumber\displaystyle\frac{1}{n}\mathbb{E}\left\|\bold{X}_i-\bold{\hat{X}}_i^{opt}+\bold{Z}_i\right\|^2
&=&\displaystyle\frac{1}{n}\mathbb{E}\left\|\bold{X}_i-\bold{\hat{X}}_i^{opt}\right\|^2\\
\nonumber&+&\displaystyle\frac{1}{n}\mathbb{E}\left\|\bold{Z}_i\right\|^2
\\&\leq&D^*_i+D_i
\end{eqnarray}
Notice that when $X_1$ and $X_2$ are independent, $\mathbb{E}[X_1X_2]=0$ and we have
\begin{eqnarray}
H\left(\bold{Y}_i|\bold{\hat{X}}^{opt}_i,\bold{{Z_i}}\right)
\leq\frac{n}{2}\log \left[\displaystyle\frac{\pi e}{6}\left(2-\frac{D_i}
{\mathbb{E}[X^2_i]+D_i} \right)\right]
\label{upperbound333}
\end{eqnarray}
The maximum interesting value of $D_i^*$ is, of course, $\mathbb{E}[X^2_i]$. This value is obtained for $D_i\rightarrow\infty$. 
It is not hard to see that the range of the upper bound in (\ref{upperbound333}) is $[0.255,0.755]$ and that it is a decreasing function of $D_1$. For the high-SNR limit, i.e., $D_i\rightarrow 0$, it is well known that the redundancy is $0.255$ bits/sample (cf. \cite{Gish_Pierce}). We define ($i\in\{1,2\}$):
\begin{eqnarray}
c_i(D_1,D_2)=\frac{n}{2}\log \left[\displaystyle\frac{\pi e}{6}\left(\frac{D^{*}_i}{D_i}+1\right)\right]
\end{eqnarray}
We can now state Theorems 3 and 4. These theorems are obtained by applying the generalized upper bound of Eq. (\ref{upperbound222}), instead of Ziv's upper bound on $H\left(\bold{Y}_i|\bold{\hat{X}}^{opt}_i,\bold{{Z_i}}\right)$, in the proofs of Theorem 1 and 2.
\begin{theorem}
For any rate pair $(R^{*}_1,R^{*}_2)$ on the boundary of ${\cal{R}}^{*}(D_1,D_2)$ and any rate pair $(R_1,R_2)$ on the boundary of ${\cal{R}}(D^*_1,D^*_2)$, with $R_1\in \overline{{\cal{R}}}_1(D_1,D_2)$, we have
\begin{eqnarray}
R_1+R_2 \leq R^{*}_1+R^{*}_2+c_1(D_1,D_2)+c_2(D_1,D_2)
\end{eqnarray}
Moreover, for any $R^{*}_1\in \overline{{\cal{R}}}_1(D_1,D_2)$, there exists a rate pair $(R_1,R_2)\in{\cal{R}}(D^*_1,D^*_2)$ such that:
\begin{eqnarray}
\nonumber R_1&=&R^{*}_1\\
R_2&\leq& R^{*}_2+c_1(D_1,D_2)+c_2(D_1,D_2)
\end{eqnarray}
\label{the22}
\end{theorem}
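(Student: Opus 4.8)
The plan is to run the proof of Theorem~\ref{the222} almost verbatim, the only change being that Ziv's universal constant $c=0.754$ is replaced, in the last inequality, by the source-dependent quantity $c_i(D_1,D_2)$ coming from the sharpened single-user bound~(\ref{upperbound222}). First I would fix an optimal code $(f_1,f_2,g)$ together with a boundary rate pair $(R^{*}_1,R^{*}_2)$ for the distortion level \emph{actually attained} by the estimation-augmented scheme, namely $(D^{*}_1,D^{*}_2)$, and set $T_i=f_i(\bold{X}_i)$, $(\bold{\hat{X}}^{opt}_1,\bold{\hat{X}}^{opt}_2)=g(T_1,T_2)$. The symbol-by-symbol MMSE estimator of Section~4 sits entirely at the decoder and is a deterministic function of $(\bold{Y}_1,\bold{Y}_2)$, so it does not alter the Slepian--Wolf rates; hence the rate region of the improved scheme is still the region~(\ref{RR}), which we now index by the distortion pair $(D^{*}_1,D^{*}_2)$ it achieves.

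Next I would reproduce, line for line, the chain in the proof of Theorem~\ref{the222}: bound $\tfrac1n H(\bold{Y}_1,\bold{Y}_2|Z_1,Z_2)$ by adjoining $(T_1,T_2)$ to the conditioning, split off $\tfrac1n H(T_1,T_2)\le R^{*}_1+R^{*}_2$, replace conditioning on $(T_1,T_2)$ by conditioning on $g(T_1,T_2)=(\bold{\hat{X}}^{opt}_1,\bold{\hat{X}}^{opt}_2)$, and apply subadditivity and removal of superfluous conditioning to reach $\tfrac1n H(\bold{Y}_1,\bold{Y}_2|Z_1,Z_2)\le R^{*}_1+R^{*}_2+\tfrac1n H(\bold{Y}_1|\bold{\hat{X}}^{opt}_1,\bold{Z}_1)+\tfrac1n H(\bold{Y}_2|\bold{\hat{X}}^{opt}_2,\bold{Z}_2)$. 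The one new ingredient is to bound each right-hand term by~(\ref{upperbound222}) rather than by $c$: since the optimal reconstruction has per-letter mean-square error at most $D^{*}_i$ while the dither $\bold{Z}_i$ is independent with per-letter variance $D_i$, one gets $\tfrac1n\mathbb{E}\|\bold{X}_i-\bold{\hat{X}}^{opt}_i+\bold{Z}_i\|^2\le D^{*}_i+D_i$, and feeding this into the Gaussian-maximum-entropy and Jensen steps of~(\ref{upper_bound_derivation}) gives $\tfrac1n H(\bold{Y}_i|\bold{\hat{X}}^{opt}_i,\bold{Z}_i)\le\tfrac12\log\!\big[\tfrac{\pi e}{6}\big(D^{*}_i/D_i+1\big)\big]=c_i(D_1,D_2)$ per sample. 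Combining yields $\tfrac1n H(\bold{Y}_1,\bold{Y}_2|Z_1,Z_2)\le R^{*}_1+R^{*}_2+c_1(D_1,D_2)+c_2(D_1,D_2)$; since the left-hand side is achievable for large $n$ on the dominant face $R_1+R_2=H(Y_1,Y_2|Z_1,Z_2)$ of ${\cal R}(D^{*}_1,D^{*}_2)$, the first claim follows, and taking $R_1=R^{*}_1$ on that face (legitimate when $R^{*}_1\in\overline{\cal R}_1(D_1,D_2)$) forces $R_2\le R^{*}_2+c_1+c_2$; the users may be swapped.

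I do not expect a genuine obstacle here: the substantive work --- the MMSE-estimator analysis, Lemma~\ref{lem22}, the closed form~(\ref{est_error}) of $D^{*}_i$, and the generalized bound~(\ref{upperbound222}) --- is already in place, and what remains is the mechanical check that the Theorem~\ref{the222} chain survives the replacement of the constant. The only point needing care is bookkeeping: the reference optimal code must be taken at the distortion the improved scheme truly achieves, $(D^{*}_1,D^{*}_2)$, and not at the nominal quantization level $(D_1,D_2)$ --- this is exactly what makes the $D^{*}_i+D_i$ bound (rather than $2D_i$) available, and hence what turns $c$ into the smaller $c_i(D_1,D_2)$.
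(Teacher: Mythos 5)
Your proposal is correct and is exactly the paper's argument: the authors give no separate proof of this theorem beyond the remark that it follows by substituting the sharpened bound (\ref{upperbound222}) for Ziv's constant $c$ in the proof of Theorem \ref{the222}, which is precisely the substitution you carry out. Your one point of care --- that the reference optimal code must be taken at the distortion $(D^*_1,D^*_2)$ actually achieved, so that $\frac1n\mathbb{E}\|\bold{X}_i-\bold{\hat{X}}^{opt}_i+\bold{Z}_i\|^2\le D^*_i+D_i$ --- matches the paper's own statement preceding the theorem (``we should compare \ldots to the optimal rate pair which achieves $(D_1^*,D_2^*)$'') and correctly reads ${\cal{R}}^{*}(D_1,D_2)$ in the theorem statement as the optimal region at the attained distortion.
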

\begin{theorem}
For any rate pair $(R_1,R_2)$ on the boundary of ${\cal{R}}(D^*_1,D^*_2)$, with $R_1\in \overline{{\cal{R}}}_1(D_1,D_2)$, there exists a rate pair $(R_1,R_2)\in{\cal{R}}(D^*_1,D^*_2)$ such that:
\begin{eqnarray}
\nonumber R_1&\leq& R^{*}_1+c_1(D_1,D_2)\\
          R_2&\leq& R^{*}_2+c_2(D_1,D_2)
\label{the22equationimproved}
\end{eqnarray}
\label{thestrongimproved}
\end{theorem}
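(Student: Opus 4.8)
\textbf{Proof proposal for Theorem~\ref{thestrongimproved}.}

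The plan is to repeat the proof of Theorem~\ref{thestrong} essentially line by line, replacing Ziv's universal constant $c = 0.754$ bits/sample by the sharper, distortion-dependent quantity $c_i(D_1,D_2)$ of Eq.~(\ref{upperbound222}) at every place where it is used. The single conceptual adjustment is that the scheme under consideration now consists of the dithered quantizers (still of cell width $\Delta_i = 2\sqrt{3D_i}$) \emph{followed by} the symbol-by-symbol linear MMSE estimator, so that the distortion it actually delivers is $(D_1^*,D_2^*)$ rather than $(D_1,D_2)$. Consequently the optimal reference code $(f_1,f_2,g)$ must be taken to achieve a rate pair $(R_1^*,R_2^*)\in{\cal{R}}^{*}(D^*_1,D^*_2)$, so that $\frac{1}{n}\mathbb{E}\|\bold{X}_i-\bold{\hat{X}}_i^{opt}\|^2\le D_i^*$, while the quantization step --- and hence the term $-n\log\Delta_i$ in the entropy bound --- stays tied to $D_i$.

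First I would invoke Lemma~\ref{lem11} in its dithered version, exactly as in the proof of Theorem~\ref{thestrong}: giving the encoders of the dithered scheme the side information $f_i(\bold{X}_i)$ does not relax the sum-rate constraint $R_1+R_2\ge H(Y_1,Y_2|Z_1,Z_2)$, so in the range $R_1\in\overline{{\cal{R}}}_1(D_1,D_2)$ the achievable region of the auxiliary ``side information'' scheme coincides with that of the dithered scheme augmented by the estimator; note that appending the estimator at the decoder does not touch this part of the argument, since it acts only after $(\bold{Y}_1,\bold{Y}_2)$ have been reconstructed and therefore leaves the Slepian--Wolf rates unchanged. Then I would run the same auxiliary scheme as before: encode $T_i=f_i(\bold{X}_i)$ with $nR_i^*$ bits, Slepian--Wolf encode $\bold{Y}_1$ given $(T_1,T_2,Z_1)$ and $\bold{Y}_2$ given $(\bold{Y}_1,T_1,T_2,Z_1,Z_2)$, and bound the two rates by the identical data-processing chain to obtain $R_1\le R_1^*+\frac{1}{n}H(\bold{Y}_1|\bold{\hat{X}}_1^{opt},Z_1)$ and $R_2\le R_2^*+\frac{1}{n}H(\bold{Y}_2|\bold{\hat{X}}_2^{opt},Z_2)$.

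The only new ingredient enters at the final step. In place of Ziv's bound $\frac{1}{n}H(\bold{Y}_i|\bold{\hat{X}}_i^{opt},\bold{Z}_i)\le c$ I would apply Eq.~(\ref{upperbound222}), which is what the chain~(\ref{upper_bound_derivation}) yields once $\frac{1}{n}\mathbb{E}\|\bold{X}_i-\bold{\hat{X}}_i^{opt}+\bold{Z}_i\|^2\le D_i^*+D_i$ is substituted for $\le 2D$ (using $\mathbb{E}\|\bold{Z}_i\|^2/n=D_i$ and the independence of the reference-code reconstruction from the dither): the cancellation against $-\frac{n}{2}\log(12D_i)$ then produces the asymmetric ratio $\frac{1}{n}H(\bold{Y}_i|\bold{\hat{X}}_i^{opt},\bold{Z}_i)\le\frac{1}{2}\log\big[\frac{\pi e}{6}\big(\frac{D_i^*}{D_i}+1\big)\big]=\frac{1}{n}c_i(D_1,D_2)$. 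Substituting this into the two rate bounds gives $R_1\le R_1^*+c_1(D_1,D_2)$ and $R_2\le R_2^*+c_2(D_1,D_2)$ for the auxiliary scheme, hence, via the Lemma~\ref{lem11} step, for the augmented dithered scheme in the range $R_1\in\overline{{\cal{R}}}_1(D_1,D_2)$; and, as in Theorem~\ref{thestrong}, $(R_1^*,R_2^*)$ can be chosen so that the shifted pair lands in the interesting range, with any point of the region realized in practice by time-sharing the two edge points.

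I expect the main difficulty to be bookkeeping rather than anything substantive: one has to keep straight that the reference code --- and hence the distortion pair the scheme is being compared against --- is indexed by $(D_1^*,D_2^*)$, whereas the quantization step is still governed by $(D_1,D_2)$, so that the two distortions enter the final bound \emph{asymmetrically} through the ratio $D_i^*/D_i$; this is precisely what turns the $\frac{1}{2}\log(\pi e/3)=0.754$ of Section~3 into $c_i(D_1,D_2)$. Everything else is a verbatim transcription of the proof of Theorem~\ref{thestrong}, and indeed the remark preceding Theorem~\ref{the22} already asserts that Theorems~\ref{the22} and~\ref{thestrongimproved} are obtained in exactly this way.
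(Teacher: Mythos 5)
Your proposal is correct and follows exactly the route the paper intends: the paper itself proves Theorem~\ref{thestrongimproved} only by the remark that one substitutes the bound of Eq.~(\ref{upperbound222}) for Ziv's constant $c$ in the proof of Theorem~\ref{thestrong}, and your write-up fleshes out precisely that substitution, including the key asymmetry that the reference code is measured against $(D_1^*,D_2^*)$ while the quantizer cell width remains tied to $(D_1,D_2)$. Nothing further is needed.
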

\section*{Acknowledgment}
The authors are grateful to Prof. Rami Zamir for useful discussions.
\newpage
\section*{Appendix A - Universal Slepian-Wolf Coding}
\renewcommand{\theequation}{A.\arabic{equation}}
\setcounter{equation}{0}
In this appendix we describe the universal Slepian-Wolf decoder used in our coding scheme. 
The following results are similar to those of \cite{Csiszar}. For convenience, we omit the notation of the conditioning on the dither variables $Z_1$ and $Z_2$. The results below can be applied for any realization of these continuous variables. Remember that our coding scheme, unlike the scheme presented in \cite{Zamir_Berger}, requires only one realization of $Z_1$ and $Z_2$ in each round. 

We consider the Slepian-Wolf setting for two correlated memoryless sources $(Y_1,Y_2)\sim P_{Y_1,Y_2}$. We assume that $Y_1\in{\cal{Y}}_1$ and $Y_2\in{\cal{Y}}_2$, where ${\cal{Y}}_1$ and ${\cal{Y}}_2$ are finite alphabets. A $(2^{nR_1},2^{nR2}, n)$ source code is a block code of length $n$ consisting of two encoders ${f}_1$, ${f}_2$,
\begin{eqnarray}
{f}_1:{\cal{Y}}_1^n\rightarrow I_{M_1},&{f}_2:{\cal{Y}}_2^n\rightarrow I_{M_2}
\end{eqnarray}
and a decoder $g$
\begin{eqnarray}
g:I_{M_1} \times I_{M_2}\rightarrow {{\cal{Y}}}_1^n\times {{\cal{Y}}}_2^n.
\end{eqnarray}
where $M_j=2^{nR_j}$, $j=1,2$.
The probability of error of the code is defined as
\begin{eqnarray}
P_e(n)\triangleq\Pr\{g\left(f_1(\bold{Y}_1),f_2(\bold{Y}_2)\right)\neq\left(\bold{Y}_1,\bold{Y}_2\right)\}
\end{eqnarray}
We will prove the following result:
\begin{theorem}
	Let $(R_1,R_2)$ be given. Then, there exists a sequence of $(2^{nR_1} ,2^{nR2}, n)$ Slepian-Wolf source codes with probability of error $P_e(n)\rightarrow 0$ as $n\rightarrow \infty$ for every memoryless source that satisfies Eq. (\ref{slepian_wolf_eq}).
\end{theorem}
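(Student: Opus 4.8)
\emph{Plan.} I would follow the standard construction of universal Slepian--Wolf codes, in the spirit of \cite{Csiszar}: independent random binning at the two encoders together with a \emph{minimum empirical joint entropy} decoder, analyzed by the method of types. The source distribution $P_{Y_1,Y_2}$ will enter neither the bin assignments nor the decoding rule, and since there are only polynomially many joint types of length $n$, a single deterministic code can be made good for all types simultaneously, hence for every source obeying (\ref{slepian_wolf_eq}).

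\emph{Construction.} For $j=1,2$, assign to each sequence $\mathbf{y}_j\in{\cal{Y}}_j^n$ a bin index $f_j(\mathbf{y}_j)$ drawn independently and uniformly over $I_{M_j}$, $M_j=2^{nR_j}$. On receiving the bin pair $(m_1,m_2)$, the decoder outputs the pair $(\mathbf{y}_1,\mathbf{y}_2)$ consistent with those bins (i.e.\ $f_1(\mathbf{y}_1)=m_1$, $f_2(\mathbf{y}_2)=m_2$) whose joint type $\hat P_{\mathbf{y}_1\mathbf{y}_2}$ has smallest entropy $H(\hat P_{\mathbf{y}_1\mathbf{y}_2})$; ties are broken by a fixed rule. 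I would first bound the error probability averaged over the random binning, uniformly over joint types, and then derandomize.

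\emph{Error analysis.} Fix a source satisfying (\ref{slepian_wolf_eq}) and condition on the transmitted pair $(\mathbf{Y}_1,\mathbf{Y}_2)$ having joint type $Q$. An error requires a competitor $(\mathbf{y}_1,\mathbf{y}_2)\neq(\mathbf{Y}_1,\mathbf{Y}_2)$ in the same bins with $H(\hat P_{\mathbf{y}_1\mathbf{y}_2})\le H(Q)$, so I would split competitors according to whether they differ from $(\mathbf{Y}_1,\mathbf{Y}_2)$ in the first coordinate only, the second only, or in both. In the first family one enumerates $\mathbf{y}_1$ by its joint type $V$ with $\mathbf{Y}_2$: there are at most $2^{nH_V(Y_1|Y_2)}$ such sequences, each sharing the first bin with probability $2^{-nR_1}$, and the constraints $H(V)\le H(Q)$ and (with probability $1-o(1)$, by type counting) $Q\approx P_{Y_1,Y_2}$ force $H_V(Y_1|Y_2)\le H(Y_1|Y_2)+o(1)$, yielding the exponent $R_1-H(Y_1|Y_2)$; the second family is symmetric with exponent $R_2-H(Y_2|Y_1)$. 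In the third family the two bin collisions are independent, so a competing pair of joint type $W$ is hit with probability $2^{-n(R_1+R_2)}$ while at most $2^{nH(W)}$ such pairs exist and $H(W)\le H(Q)\approx H(Y_1,Y_2)$, yielding the exponent $R_1+R_2-H(Y_1,Y_2)$. A union bound over the polynomially many types $Q,V,W$ together with Markov's inequality then fixes one deterministic pair of bin assignments for which, with $\rho\triangleq\min\{R_1-H(Y_1|Y_2),\,R_2-H(Y_2|Y_1),\,R_1+R_2-H(Y_1,Y_2)\}$,
\begin{equation}
P_e(n)\;\le\;\mathrm{poly}(n)\cdot 2^{-n(\rho-o(1))}\;\longrightarrow\;0\quad(n\to\infty),
\end{equation}
for every source obeying (\ref{slepian_wolf_eq}) with the given conditional entropies, provided $\rho>0$.

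\emph{Main obstacle.} The step that needs the most care is the third family: one must verify that its count is governed by the \emph{joint} entropy $H(W)$ of the competing pair, so that the exponent collapses exactly to the sum-rate margin and not to a weaker quantity -- this is where the independence of the two bin drawings and the precise type-size estimates enter. The only remaining subtlety is that $\rho>0$ requires (\ref{slepian_wolf_eq}) to hold strictly, and on a face of the region the exponent vanishes; but this is immaterial in Section~2, since ${\cal{R}}(D_1,D_2)$ is used there only through the arbitrarily small slack $\delta$ in the definition of achievability, i.e.\ with rates $H(\cdot)+\delta$ strictly interior to the region, for which $\rho$ is bounded below by a positive constant depending on $\delta$.
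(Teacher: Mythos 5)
Your proposal is correct and follows essentially the same route as the paper's Appendix A: independent random binning, a minimum empirical joint entropy decoder, the three-way split of competitors (first coordinate only, second only, both), type-counting bounds giving the exponents $R_1-H(Y_1|Y_2)$, $R_2-H(Y_2|Y_1)$, $R_1+R_2-H(Y_1,Y_2)$, and derandomization from the average error probability. The only cosmetic difference is that you condition on the joint type of the transmitted pair where the paper invokes strong typicality, which amounts to the same estimate.
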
 
\begin{proof}
Throughout the proof, the cardinality of a set ${\cal{A}}$ is denoted by $|{\cal{A}}|$. The empirical joint entropy $H_{\bold{y}_1,\bold{y}_2}(Y_1,Y_2)$ and the empirical conditional entropy $H_{\bold{y}_1,\bold{y}_2}(Y_1|Y_2)$  induced by the sequences $\bold{y}_1\in{\cal{Y}}^n_1$, $\bold{y}_2\in{\cal{Y}}^n_2$ are defined as
\begin{eqnarray}
H_{\bold{y}_1,\bold{y}_2}(Y_1,Y_2)&\triangleq&-\displaystyle \sum_{y_1\in{\cal{Y}}_1}\displaystyle \sum_{y_2\in{\cal{Y}}_2}P_{\bold{y}_1,\bold{y}_2}(y_1,y_2)\log P_{\bold{y}_1,\bold{y}_2}(y_1,y_2)\\
H_{\bold{y}_1,\bold{y}_2}(Y_1|Y_2)&\triangleq&-\displaystyle \sum_{y_1\in{\cal{Y}}_1}\displaystyle \sum_{y_2\in{\cal{Y}}_2}P_{\bold{y}_1,\bold{y}_2}(y_1,y_2)\log P_{\bold{y}_1,\bold{y}_2}(y_1|y_2)
\end{eqnarray}
where $P_{\bold{y}_1,\bold{y}_2}(y_1,y_2)$, $P_{\bold{y}_1,\bold{y}_2}(y_1|y_2)$ are the empirical joint and conditional distribution functions, respectively, induced by $\bold{y}_1$ and $\bold{y}_2$ (see \cite[Chap. 11]{Cover}).  

To prove the theorem, we use the following random-binning mechanism:
\begin{itemize}
	\item {\it{Codebook generation:}} Assign every $\bold{y}_1\in {\cal{Y}}_1^n$ to one of $2^{nR1}$ bins independently according to a uniform distribution on $\{1, 2,\ldots 2^{nR1}\}$. Similarly, randomly assign every  $\bold{y}_2\in {\cal{Y}}_2^n$ to one of $2^{nR2}$ bins.
	Reveal the assignments $f_1$ and $f_2$ to the encoders and the decoder.
	\item  {\it{Encoding:}} User $j$ sends the index of the bin to which $\bold{Y}_j$ belongs, $j=1,2$.
	\item  {\it{Decoding:}} Given the received index pair $\left(T_1=f_1(\bold{Y}_1), T_2=f_2(\bold{Y}_2)\right)$, the decoder uses the Minimum Joint Entropy (MJE) decoder: Choose the pair $(\bold{y}^{'}_1,\bold{y}^{'}_2): f_1(\bold{y}^{'}_1)=T_1, f_2(\bold{y}^{'}_2)=T_2$ which minimizes the empirical joint entropy induced by $(\bold{y}^{'}_1,\bold{y}^{'}_2)$, $H_{\bold{y}^{'}_1,\bold{y}^{'}_2}(Y_1,Y_2)$.
\end{itemize}
Define the following events:
\begin{eqnarray}
\nonumber E_0&=&\big\{ \left(\bold{Y}_1,\bold{Y}_2 \right)\notin A^n_{\epsilon}\big\}\\
\nonumber E_1&=&\big\{ \left(\bold{Y}_1,\bold{Y}_2\right)\in A^n_{\epsilon}\big\}\cap
\big\{\exists \bold{y}^{'}_1\neq \bold{Y}_1:f_1(\bold{y}^{'}_1)=T_1 \text{ and }  
H_{\bold{y}^{'}_1,\bold{Y}_2}(Y_1,Y_2)\leq H_{\bold{Y}_1,\bold{Y}_2}(Y_1,Y_2)\big\}\\
\nonumber E_2&=&\big\{\left( \bold{Y}_1,\bold{Y}_2 \right)\in A^n_{\epsilon} \big\}\cap\big\{\exists \bold{y}^{'}_2\neq \bold{Y}_2:f_2(\bold{y}^{'}_2)=T_2 \text{ and }  
H_{\bold{Y}_1,\bold{y}^{'}_2}(Y_1,Y_2)\leq H_{\bold{Y}_1,\bold{Y}_2}(Y_1,Y_2)\big\}\\
\nonumber E_{12}&=&\big\{\left( \bold{Y}_1,\bold{Y}_2\right)\in A^n_{\epsilon} \big\}\cap\big\{\exists \left(\bold{y}^{'}_1,\bold{y}^{'}_2\right): \bold{y}^{'}_1\neq \bold{Y}_1,\bold{y}^{'}_2\neq \bold{Y}_2,f_1(\bold{y}^{'}_1)=T_1,f_2(\bold{y}^{'}_2)=T_2\big.\\
&&\big. \text{ and } H_{\bold{y}^{'}_1,\bold{y}^{'}_2}(Y_1,Y_2)\leq H_{\bold{Y}_1,\bold{Y}_2}(Y_1,Y_2)\big\}
\end{eqnarray}
where $A^n_{\epsilon}$, $\epsilon>0$, is the strongly typical set with respect to the source $P_{Y_1,Y_2}$ (see \cite[Eq. 10.107]{Cover}).
Remember that $\bold{Y}_1$, $\bold{Y}_2$, $f_1$ and $f_2$ are random.
Obviously,
\begin{eqnarray}
H_{\bold{y}^{'}_1,\bold{y}_2}(Y_1,Y_2)\leq H_{\bold{y}_1,\bold{y}_2}(Y_1,Y_2) \Leftrightarrow 
H_{\bold{y}^{'}_1,\bold{y}_2}(Y_1|Y_2)\leq H_{\bold{y}_1,\bold{y}_2}(Y_1|Y_2) 
\end{eqnarray}
where $H_{\bold{y}^{'}_1,\bold{y}_2}(Y_1|Y_2)$, $H_{\bold{y}_1,\bold{y}_2}(Y_1|Y_2)$ are the empirical conditional entropies induced by $({\bold{y}^{'}_1,\bold{y}_2})$ and $({\bold{y}_1,\bold{y}_2})$, respectively.
We have an error if there is another pair of sequences in the same bin such that the empirical joint entropy induced by this pair is smaller than the empirical joint entropy induced by $\left(\bold{Y}_1,\bold{Y}_2\right)$. Hence,
\begin{eqnarray}
\nonumber \bar{P}_e(n)&\leq& \Pr\left\{E_0\cup E_1 \cup E_2 \cup E_{12} \right\}\\
&\leq&\Pr\left\{ E_0\right\}+\Pr\left\{ E_1\right\}+\Pr\left\{ E_2\right\}+\Pr\left\{ E_{12}\right\}
\end{eqnarray}
where $\bar{P}_e(n)\triangleq  {\mathbb{E}} [{P}_e(n)]$ is the expected probability of error where the expectation is taken with respect to the random choice of the code. The first inequality follows from the fact that we treat $E_0$ as error event and the second inequality is due to the union bound.
We first consider $E_0$. By the asymptotic equipartition property (AEP), $\Pr\left\{ E_0\right\}\rightarrow 0$ and hence for $n$ sufficiently large, $\Pr\left\{ E_0\right\}<\epsilon$. To bound $\Pr(E_1)$, we have
\begin{eqnarray}
\nonumber\Pr(E_1)&=&\displaystyle\sum_{\left( \bold{y}_1,\bold{y}_2\right)\in A_{\epsilon}^n}
P(\bold{y}_1,\bold{y}_2)\\
\nonumber&&\cdot\Pr\big\{\exists \bold{y}^{'}_1\neq \bold{y}_1:f_1(\bold{y}^{'}_1)=f_1(\bold{y}_1)\text{ and }  
H_{\bold{y}^{'}_1,\bold{y}_2}(Y_1|Y_2)\leq H_{\bold{y}_1,\bold{y}_2}(Y_1|Y_2)\big\}\\
\nonumber&\leq&\displaystyle\sum_{\left( \bold{y}_1,\bold{y}_2\right)\in A_{\epsilon}^n}
P(\bold{y}_1,\bold{y}_2)\displaystyle\sum_{\bold{y}^{'}_1\in B(\bold{y}_1,\bold{y}_2)}\Pr\{   f_1(\bold{y}^{'}_1)=  f_1(\bold{y}_1)\}\\
&=&\displaystyle\sum_{\left( \bold{y}_1,\bold{y}_2\right)\in A_{\epsilon}^n}
P(\bold{y}_1,\bold{y}_2) 
2^{-nR_1} \left|B(\bold{y}_1,\bold{y}_2)\right|
\end{eqnarray}
where the set $B(\bold{y}_1,\bold{y}_2)$ is defined as 
\begin{eqnarray}
B(\bold{y}_1,\bold{y}_2)\triangleq\big\{\bold{y}^{'}_1: H_{\bold{y}^{'}_1,\bold{y}_2}(Y_1|Y_2)\leq H_{\bold{y}_1,\bold{y}_2}(Y_1|Y_2)\big\}
\end{eqnarray}
and the last equality simply follows from the definition of the random-binning coding scheme.
Using the method of types (see \cite[Chap. 10-11]{Cover}), we have
\begin{eqnarray}
\nonumber|B(\bold{y}_1,\bold{y}_2)|&=&\displaystyle\sum_{\bold{y}^{'}_1\in B(\bold{y}_1,\bold{y}_2)} 1\\
\nonumber&=&\displaystyle\sum_{{V_{\bold{y}^{'}_1|\bold{y}_2}}\subseteq B(\bold{y}_1,\bold{y}_2)}\left|V_{\bold{y}^{'}_1|\bold{y}_2}\right|\\
\nonumber&\leq&\displaystyle\sum_{V_{\bold{y}^{'}_1|\bold{y}_2}\subseteq B(\bold{y}_1,\bold{y}_2)}
2^{n\big( H_{\bold{y}^{'}_1,\bold{y}_2}(Y_1|Y_2)+\epsilon\big)}\\
\nonumber&\leq&\displaystyle\sum_{V_{\bold{y}^{'}_1|\bold{y}_2}\subseteq B(\bold{y}_1,\bold{y}_2)}
2^{n\big( H_{\bold{y}_1,\bold{y}_2}(Y_1|Y_2)+\epsilon\big)}\\
\nonumber&\leq&(n+1)^{|{\cal{Y}}_1||{\cal{Y}}_2|} 2^{n\big( H_{\bold{y}_1,\bold{y}_2}(Y_1|Y_2)+\epsilon\big)}\\
&\leq&(n+1)^{|{\cal{Y}}_1||{\cal{Y}}_2|}2^{n\big( H(Y_1|Y_2)+2\epsilon\big)}
\label{appA1}
\end{eqnarray}
where $V_{\bold{y}^{'}_1|\bold{y}_2}$ is the conditional type of $\bold{y}^{'}_1$ given $\bold{y}_2$ (see \cite[Chap. 10]{Cover}). The second equality follows from the fact that the event $\bold{y}^{'}_1\in B(\bold{y}_1,\bold{y}_2)$ depends only on the type $V_{\bold{y}^{'}_1|\bold{y}_2}$. In the first inequality, we used the known upper bound on the size of the conditional type. The second inequality stems from the definition of $B(\bold{y}_1,\bold{y}_2)$. In the third inequality we used a known upper bound on the number of conditional types. The last inequality follows since (see \cite[Chap. 10]{Cover})
\begin{eqnarray}
(\bold{y}_1,\bold{y}_2)\in A_{\epsilon}^n &\Rightarrow& H{\bold{y}_1,\bold{y}_2}(Y_1|Y_2)\leq H(Y_1|Y_2)+\epsilon.
\label{appA2}
\end{eqnarray}
Therefore, we have
\begin{eqnarray}
\nonumber\Pr(E_1)&\leq&\displaystyle\sum_{\left( \bold{y}_1,\bold{y}_2\right)\in A_{\epsilon}^n}
P(\bold{y}_1,\bold{y}_2) 
2^{-nR_1} \left|B(\bold{y}_1,\bold{y}_2)\right|\\
&\leq&(n+1)^{|{\cal{Y}}_1||{\cal{Y}}_2|} 2^{-nR_1}2^{n(H(Y_1|Y_2)+2\epsilon)}
\end{eqnarray}
where in the second inequality we used Eq. (\ref{appA1}).
Similarly, it can be shown that
\begin{eqnarray}
\Pr(E_2)
&\leq& (n+1)^{|{\cal{Y}}_1||{\cal{Y}}_2|} 2^{-nR_2}\cdot 2^{n(H(Y_2|Y_1)+2\epsilon)}
\end{eqnarray}
and
\begin{eqnarray}
\Pr(E_{12})
&\leq& (n+1)^{|{\cal{Y}}_1||{\cal{Y}}_2|} 2^{-n(R_1+R_2)}\cdot 2^{n(H(Y_1,Y_2)+2\epsilon)}
\end{eqnarray}
Hence, taking $R_1>H(Y_1|Y_2)+2\epsilon$,  $R_2>H(Y_2|Y_1)+2\epsilon$ and $R_1 + R_2 > H(Y_1,Y_2)+2\epsilon$, we have $P(E_1) <\epsilon$, $P(E_2) <\epsilon$ and $P(E_{12}) <\epsilon$ for sufficiently large $n$. Since $\bar{P}_e(n)\leq 4\epsilon$, there exists at least one universal code $(f_1^{*},f_2^{*},g^{*})$ with 
$P_e(n)\leq 4\epsilon$. Thus, we can construct a sequence of universal codes with $P_e(n)\rightarrow 0$, and
the proof of achievability is complete.
\end{proof}
{\it{Remark}}. It can be shown that the universal decoder presented in the proof above also achieves the optimal error exponent. 
\section*{Appendix B - Proof of Lemma \ref{lem22}}
\renewcommand{\theequation}{B.\arabic{equation}}
\setcounter{equation}{0}
We now prove Lemma \ref{lem22}. We first show that the random vector $\left(Y_1-Z_1,Y_2-Z_2\right)$ is equivalent to the random vector $\left(X_1+N_1,X_2+N_2\right)$ where $N_1$, $N_2$ are independent of $X_1$, $X_2$ and of each other and $N_i \sim {\cal{U}}[-\sqrt{3D_i},\sqrt{3D_i}]$, $i\in\{1,2\}$. Therefore, the dithered quantization process can be viewed as passing $X_1$, $X_2$ through independent noisy memoryless channels $\hat{X}_1=X_1+N_1$ and $\hat{X}_2=X_2+N_2$, respectively. We start with the following conditional probability distribution.
\begin{eqnarray}
f_{N_1,N_2|X_1,X_2}\left(N_1,N_2|X_1,X_2\right)=f\left(N_1|X_1\right)f\left(N_2|X_2\right)
\end{eqnarray}
where we have defined $N_1\triangleq Y_1-Z_1-X_1$, $N_2\triangleq Y_2-Z_2-X_2$. The equality stems from the fact that $\left(Y_1-Z_1-X_1\right)$ is independent of $X_2$ given $X_1$ and  $\left(Y_2-Z_2-X_2\right)$ is independent of $X_1$ given $X_2$, since $(Z_1,Z_2)$ are independent of $(X_1,X_2)$. In addition, it can be easily seen that for every value of $X_i$, $N_i$ is uniformly distributed over $[-\sqrt{3D_i},\sqrt{3D_i}]$. Therefore, $N_i$ is independent of $X_i$ and we have
\begin{eqnarray}
f_{N_1,N_2|X_1,X_2}\left(N_1,N_2|X_1,X_2\right)=f\left(N_1\right)f\left(N_2\right)
\end{eqnarray}
Lemma \ref{lem22} follows directly from this result:
\begin{equation}
\begin{array}{lllll}
\mathbb{E}[Y_i-{Z_i}]&=&\mathbb{E}[X_i+N_i]&=&\mathbb{E}[X_i]\\
\mathbb{E}[(Y_i-{Z_i})^2]&=&\mathbb{E}[(X_i+N_i)^2]&=&\mathbb{E}[{X_i}^2]+D_i\\
\mathbb{E}[X_i(Y_i-{Z_i})]&=&\mathbb{E}[X_i(X_i+N_i)]&=&\mathbb{E}[X_i^2]\\
\mathbb{E}\left[(Y_1-Z_1)(Y_2-Z_2)\right]&=&\mathbb{E}\left[(X_1+N_1)(X_2+N_2)\right]&=&\mathbb{E}\left[X_1X_2\right]\\
\mathbb{E}\left[X_1(Y_2-Z_2)\right]&=&\nonumber\mathbb{E}\left[X_1(X_2+N_2)\right]&=&\mathbb{E}\left[X_1X_2\right]\\
\mathbb{E}\left[X_2(Y_1-Z_1)\right]&=&\mathbb{E}\left[X_2(X_1+N_1)\right]&=&\mathbb{E}\left[X_1X_2\right]
\end{array}
\end{equation}
\section*{Appendix C - Calculation of the Estimation Error}
\renewcommand{\theequation}{C.\arabic{equation}}
\setcounter{equation}{0}
In this appendix we calculate the estimation error given in Eq. (\ref{est_error}).
The optimal linear estimator of $X_1$ given the vector $\underline{Y}$ is:
\begin{equation}
\begin{array}{lll}
\hat{X}_1&=&\underline{Y}\cdot\displaystyle\frac{1}{|\Lambda|}\left(\begin{array}{l}
|\Lambda|-D_1(\mathbb{E}[X^2_2]+D_2)\\
\mathbb{E}[X_1X_2]D_1
\end{array}\right)
\end{array}
\end{equation}
where 
\begin{eqnarray}
{|\Lambda|}=(\mathbb{E}[X^2_1]+D_1)(\mathbb{E}[X^2_2]+D_2)-
\mathbb{E}[X_1X_2]^2
\end{eqnarray}
The error of the optimal linear estimator is given by:
\begin{eqnarray}
D_1^*&=&\mathbb{E}\left[X_1^2\right]-\mathbb{E}\left[\hat{X}_1^2\right]
\end{eqnarray}
Calculating the second term:
\begin{eqnarray}
\nonumber |\Lambda|^2\mathbb{E}\left[\hat{X}_1^2\right]&=&\left(|\Lambda|-D_1\left(\mathbb{E}[X^2_2]+D_2\right)\right)^2\mathbb{E}\left[\left(Y_1-Z_1
\right)^2\right]\\
\nonumber&&+\mathbb{E}[X_1X_2]^2D_1^2\mathbb{E}\left[\left(Y_2-Z_2\right)^2\right]\\
\nonumber&&+2\left(|\Lambda|-D_1\left(\mathbb{E}[X^2_2]+D_2\right)\right)\mathbb{E}[X_1X_2]D_1\mathbb{E}\left[\left(Y_1-Z_1\right)\left(Y_2-Z_2\right)\right]\\
\nonumber&=&\left(|\Lambda|-D_1\left(\mathbb{E}[X^2_2]+D_2\right)\right)^2\left(\mathbb{E}[X^2_1]+D_1\right)\\
\nonumber&&+\mathbb{E}[X_1X_2]^2D_1^2\left(\mathbb{E}[X^2_2]+D_2\right)\\
\nonumber&&+2\left(|\Lambda|-D_1\left(\mathbb{E}[X^2_2]+D_2\right)\right)\mathbb{E}[X_1X_2]^2D_1\\
\nonumber&=&\left(|\Lambda|-D_1\left(\mathbb{E}[X^2_2]+D_2\right)\right)^2\left(\mathbb{E}[X^2_1]+D_1\right)\\
\nonumber&&+\mathbb{E}[X_1X_2]^2D_1\left(D_1\left(\mathbb{E}[X^2_2]+D_2\right)+2\left(|\Lambda|-D_1\left(\mathbb{E}[X^2_2]+D_2\right)\right)
\right)\\
\nonumber&=&\left(|\Lambda|-D_1\left(\mathbb{E}[X^2_2]+D_2\right)\right)^2\left(\mathbb{E}[X^2_1]+D_1\right)\\
\nonumber&&+\mathbb{E}[X_1X_2]^2D_1\left(2|\Lambda|-D_1\left(\mathbb{E}[X^2_2]+D_2\right)\right)\\
\nonumber &=&\left(|\Lambda|-D_1\left(\mathbb{E}[X^2_2]+D_2\right)\right)
\nonumber\left(\left(|\Lambda|-D_1\left(\mathbb{E}[X^2_2]+D_2\right)\right)\left(\mathbb{E}[X^2_1]+D_1\right)+\mathbb{E}[X_1X_2]^2D_1\right) \\
\nonumber&&+|\Lambda|\mathbb{E}[X_1X_2]^2D_1\\
\nonumber&=&\left(|\Lambda|-D_1\left(\mathbb{E}[X^2_2]+D_2\right)\right)
\nonumber\left(|\Lambda|\left(\mathbb{E}[X^2_1]+D_1\right)-D_1|\Lambda|\right) \\
\nonumber&&+|\Lambda|\mathbb{E}[X_1X_2]^2D_1\\
\nonumber&=&\left(|\Lambda|-D_1\left(\mathbb{E}[X^2_2]+D_2\right)\right)
\nonumber|\Lambda|\mathbb{E}[X^2_1] \\
\nonumber&&+|\Lambda|\mathbb{E}[X_1X_2]^2D_1\\
\nonumber&=&|\Lambda|^2\mathbb{E}[X^2_1]
\nonumber+|\Lambda|D_1\left(\mathbb{E}[X_1X_2]^2-\mathbb{E}[X^2_1]\left(\mathbb{E}[X^2_2]+D_2\right)\right)\\
\nonumber&=&|\Lambda|^2\mathbb{E}[X^2_1]
\nonumber-|\Lambda|D_1\left(\mathbb{E}[X^2_1]\left(\mathbb{E}[X^2_2]+D_2\right)-\mathbb{E}[X_1X_2]^2\right)\\
&=&|\Lambda|^2\mathbb{E}[X^2_1]
-|\Lambda|D_1\left(|\Lambda|-D_1\left(\mathbb{E}[X^2_2]+D_2\right)\right)
\end{eqnarray}
where in the second equality we used the results of Lemma \ref{lem22}.
Therefore, we have
\begin{eqnarray}
\nonumber D_1^*&=&\frac{|\Lambda|D_1\left(|\Lambda|-D_1(\mathbb{E}[X^2_2]+D_2))\right)}{|\Lambda|^2}\\
\nonumber&=&D_1\left(1-\frac{D_1(\mathbb{E}[X^2_2]+D_2)}{|\Lambda|}\right)\\
\nonumber&=&D_1\left(1-\frac{D_1(\mathbb{E}[X^2_2]+D_2)}{(\mathbb{E}[X^2_1]+D_1)(\mathbb{E}[X^2_2]+D_2)-
\mathbb{E}[X_1X_2]^2}\right)\\
&=&D_1\frac{\mathbb{E}[X^2_1](\mathbb{E}[X^2_2]+D_2)-\mathbb{E}[X_1X_2]^2}{(\mathbb{E}[X^2_1]+D_1)(\mathbb{E}[X^2_2]+D_2)-
\mathbb{E}[X_1X_2]^2}
\end{eqnarray}
\clearpage


\begin{thebibliography}{1}
\bibitem{Slepian_Wolf}
D.~Slepian and J.~K.~Wolf, ``Noiseless coding of correlated information
sources,'' {\it IEEE Trans.\ Inform.\ Theory}, vol.\ 19, pp.\ 471--480, Jul.
1973.
\bibitem{Wyner_Ziv}
A.~D.~Wyner and J.~Ziv, ``The rate-distortion function for source coding
with side information at the decoder,'' {\it IEEE Trans. Inform. Theory}, vol.\ 22, pp.\ 1--10, Jan. 1976.
\bibitem{Wyner}
A.~D.~Wyner, ``The rate-distortion function for source coding with side
information at the decoder-II: General sources,'' {\it Inform. Contr.}, vol.\
38, pp.\ 60--80, 1978.
\bibitem{Ahlswede_Korner}
R.~Ahlswede and J.~K\"{o}rner, ``Source coding with side information and
a converse for degraded broadcast channels,'' {\it IEEE Trans. Inform. Theory}, vol.\ 21, no.\ 6, pp.\ 629--637, Nov. 1975.
\bibitem{Berger_Yeung}
T.~Berger and R.~W.~Yeung, ``Multiterminal source encoding with
one distortion criterion,'' {\it IEEE Trans.\ Inform.\ Theory}, vol.\ 35, no.\ 2, pp.\ 228--236, Mar. 1989.
\bibitem{Zamir_Berger}
R.~Zamir and T.~Berger, ``Multiterminal source coding with high resolution,''
{\it IEEE Trans. Inf. Theory}, vol.\ 45, pp.\ 106--117, Jan. 1999.
\bibitem{Wagner_Anantharam}
A.~Wagner and V.~Anantharam, ``An improved outer bound for
multiterminal source coding,'' {\it IEEE Trans. Inf. Theory}, vol.\ 54, no.\ 5, pp.\ 1919--1937, May 2008.
\bibitem{Wagner_Tavildar_Viswanath}
A.~Wagner, S.~Tavildar, and P.~Viswanath, ``Rate region of the quadratic
gaussian two-encoder source-coding problem,'' {\it IEEE Trans. Inf. Theory}, vol.\ 54, no.\ 5, pp.\ 1938--1961, May 2008.
\bibitem{Berger_Tung}
T.~Berger and S.~Y.~Tung, ``Encoding of correlated analog sources,''
in {\it Proc. IEEE--USSR Joint Workshop on Information Theory}, pp.\ 7--10, 1975.
\bibitem{Courtade_Weissman}
T.~A.~Courtade and T.~Weissman, ``Multiterminal source coding under
logarithmic loss,'' {\it IEEE Trans. Inform. Theory}, vol.\ 60, pp.\ 740--761, Jan. 2014.
\bibitem{Kaspi_Merhav}
Y.~Kaspi and N.~Merhav, ``Zero-delay and causal single-user and multi-user lossy source coding with decoder side information,'' {\it IEEE Trans. Inf. Theory}, vol.\ 60, no.\ 11, pp. 6931–-6942, Nov. 2014.
\bibitem{Ziv}
J.~Ziv, ``On universal quantization,'' {\it IEEE Trans. Inform. Theory}, vol.\ 31, pp.\
344--347, May 1985.
\bibitem{Zamir_Feder1}
R.~Zamir and M.~Feder, ``On universal quantization by randomized
uniform/lattice quantizer,'' {\it IEEE Trans. Inform. Theory}, vol.\ 38, pp.\
428--436, Mar. 1992.
\bibitem{Zamir_Feder2}
R.~Zamir and M.~Feder, ``Information rates of pre/post filtered dithered quantizers,'' {\it IEEE
Trans. Inform. Theory}, vol.\ 42, pp.\ 1340--1353, Sept. 1996.
\bibitem{Kieffer}
J.~Kieffer, ``A unified approach to weak universal source coding,'' {\it IEEE
	Trans. Inform. Theory}, vol.\ 24, pp.\ 674--682, Nov. 1978.
\bibitem{Gish_Pierce}
H.~Gish and N.~J.~Pierce, ``Asymptotically efficient quantization,''
{\it IEEE Trans. Inform. Theory}, vol.\ 14, pp.\ 676--683, Sept. 1968.
\bibitem{Csiszar}
I.~Csiszar and J.~K\"{o}rner, ``Towards a general theory of source networks,''
{\it IEEE Trans. Inform. Theory}, vol.\ 26, pp.\ 155--165, Mar. 1980.
\bibitem{Cover}
T. M.~Cover and J. A.~Thomas, ``Elements of information theory,'', 
{\it John Wiley} \& {\it Sons, 2nd Edition}.
\end{thebibliography}
\end{document}